\newtheorem{theorem}{Theorem}{}
\newtheorem{example}{Example}{}
{}
\newtheorem{defi}{Definition}{}
{}
\begin{document}

\title{A General Approach for Construction of Deterministic Compressive Sensing Matrices}

\author{\au{Mohamad Mahdi Mohades$^{1}$}, \au{Mohamad Hossein Kahaei$^{1\corr}$}}

\address{\add{1}{1Department of Electronics and Communication Engineering,  Iran University of Science and Technology, Tehran,
       Iran}
\email{kahaei@iust.ac.ir}}

\begin{abstract}
In this paper, deterministic construction of measurement matrices in Compressive Sensing (CS) is considered.
First, by employing the column replacement concept, a theorem for construction of large minimum distance linear
codes containing all-one codewords is proposed. Then, by applying an existing theorem over these linear codes,
deterministic sensing matrices are constructed. To evaluate this procedure, two examples of constructed sensing
matrices are presented. The first example contains a matrix of size ${{p}^{2}}\times {{p}^{3}}$ and
coherence ${1}/{p}\;$, and the second one comprises a matrix with the size $p\left( p-1 \right)\times {{p}^{3}}$ and
coherence ${1}/{\left( p-1 \right)}\;$, where $p$ is a prime integer. Based on the Welch bound, both examples
asymptotically achieve optimal results.  Moreover, by presenting a new theorem, the column replacement is used
for resizing any sensing matrix to a greater-size sensing matrix whose coherence is calculated. Then, using
an example, the outperformance of the proposed method is compared to a well-known method.
Simulation results show the satisfying performance of the column replacement method either in created or
resized sensing matrices.
\end{abstract}

\maketitle

\section{Introduction}\label{sec:Introduction}

Compressive Sampling (CS) has widely been used to dramatically reduce the sampling frequency below the Nyquist rate \cite{Donoho_1}. This is effectively performed based on the sparse property of a signal vector ${{\mathbf{x}}_{n\times 1}}\in {{\mathbb{R}}^{n}}$ whose sparsity order, $k$, indicates the number of its non-zero entries. Then, the CS problem is illustrated by ${{\mathbf{y}}_{m\times 1}}={{\mathbf{A}}_{m\times n}}{{\mathbf{x}}_{n\times 1}}$ where ${{\mathbf{y}}_{m\times 1}}\in {{\mathbb{R}}^{m}}$ shows the measurement vector, ${{\mathbf{A}}_{m\times n}}$ is the sampling matrix (or sensing matrix), and $m\ll n$. This sampling is worthwhile provided that the original signal ${{\mathbf{x}}_{n\times 1}}$ can be uniquely and exactly reconstructed. The sparsest solution to this problem using ${{l}_{0}}$-norm minimization has been investigated in \cite{Candes_2} which leads to an NP-hard problem \cite{Ge_3}. To simplify this issue, an ${{l}_{1}}$-norm minimization solution has been proposed in \cite{Candes_4} in which ${{\mathbf{A}}_{m\times n}}$ must satisfy the so-called Restricted Isometry Property (RIP). Then, ${{\mathbf{x}}_{n\times 1}}$ can be exactly and uniquely reconstructed by using ${{l}_{1}}$-norm minimization.

\begin{defi}
 The matrix ${{\mathbf{A}}_{m\times n}}$ satisfies the RIP of order $k$ with constant ${{\delta }_{k}}\in \left[ \begin{matrix} 0 & 1 \\ \end{matrix} \right)$, if for all $k$-sparse vectors ${{\mathbf{x}}_{n\times 1}}$ we have the next inequality \cite{Candes_4},
\end {defi}

\begin{equation}\label{Eqn:RIP}
1-{{\delta }_{k}}\le \frac{\left\| {{\mathbf{A}}_{m\times n}}{{\mathbf{x}}_{n\times 1}} \right\|_{2}^{2}}{\left\| {{\mathbf{x}}_{n\times 1}} \right\|_{2}^{2}}\le 1+{{\delta }_{k}}.
\end{equation}

As mentioned, the signal ${{\mathbf{x}}_{n\times 1}}$ sampled by the appropriate sampling matrix ${{\mathbf{A}}_{m\times n}}$ could be reconstructed through solving the related ${{l}_{1}}$-norm minimization problem. The latter problem may also be solved using the greedy algorithms such as the Orthogonal Matching Pursuit (OMP) algorithm or its modifications \cite{Tropp_5}.
At the beginning, Gaussian sensing matrices whose elements are independent and identically distributed (i.i.d.) were proposed which satisfy the RIP with a high probability \cite{Baraniuk_6}. For such matrices, the number of rows, columns, and the sparsity order are related to each other through the inequality $m\ge O\left( k\log n \right)$.
On the other hand, deterministic sensing matrices may be used. Even though, due to the huge number of $k$-sparse signals, evaluation of the RIP for a given deterministic sensing matrix is frustrating, such a matrix can still satisfy the RIP if its coherence is low enough. A deterministic sensing matrix ${{\mathbf{A}}_{m\times n}}$ of coherence ${{\mu }_{\mathbf{A}}}$ satisfies the RIP of order $k$ with ${{\delta }_{k}}\le {{\mu }_{\mathbf{A}}}\left( k-1 \right)$ where $k\le {1}/{{{\mu }_{\mathbf{A}}}}\;+1$ \cite{Bourgain_7}. The coherence of ${{\mathbf{A}}_{m\times n}}$, whose columns are defined as $\left[ {{\mathbf{a}}_{1}},\cdots ,{{\mathbf{a}}_{N}} \right]$, is given by

\begin{equation}\label{Eqn:Coherence}
{{\mu }_{\mathbf{A}}}=\underset{i\ne j}{\mathop{\max }}\,\frac{\left| \left\langle {{\mathbf{a}}_{i}},{{\mathbf{a}}_{j}} \right\rangle  \right|}{{{\left\| {{\mathbf{a}}_{{{i}_{{}}}}} \right\|}_{2}}.{{\left\| {{\mathbf{a}}_{j}} \right\|}_{2}}}\ \ \ \ \ \ \ 1\le i,j\le N.
\end{equation}

The minimum coherence of an arbitrary ${{\mathbf{A}}_{m\times n}}$ is provided by the Welch bound expressed as \cite{Welch_8}

\begin{equation}\label{Eqn:Welch}
{{\mu }_{\mathbf{A}}}\ge \sqrt{\frac{n-m}{m\left( n-1 \right)}}.
\end{equation}

where ${{\mu }_{\mathbf{A}}}\ge \sqrt{{1}/{m}\;}$ for $m\ll n$. This means that for deterministic sensing matrices, in terms of coherence, the RIP of sparsity $k=O\left( \sqrt{m} \right)$ is satisfied which is known as the square root bottleneck.
To construct deterministic sensing matrices with a low coherence, the DeVore's construction was firstly introduced \cite{DeVore_9}. In this structure, binary sensing matrices of size ${{p}^{2}}\times {{p}^{r+1}}$ and coherence of ${r}/{p}\;$ are used which satisfy the RIP of order $k<{p}/{r}\;+1$ where $r$ shows the degree of a polynomial with coefficients in Galois field ${{\mathbb{F}}_{p}}$. By considering polynomials over finite projective spaces, instead of the finite field ${{\mathbb{F}}_{p}}$, extension of the DeVore's construction  has led to lower coherence sensing matrices \cite{Mohades_10}.
The algebraic geometry codes \cite{Li_11} have also been utilized for deterministic sensing matrices construction.
Other binary structures can be seen in \cite{sasmal_24}, \cite{naidu_14} and \cite{naidu_25}.
In \cite{Mohades_15}, by combining two similar Reed-Solomon generator matrices through the tensor product and utilizing the generated codewords, a set of complex-valued deterministic sensing matrices with the size of ${{q}^{2}}\times {{q}^{3}}$ and coherence ${1}/{q}\;$, where $q$ is any prime power, has been proposed. Using the Paley graphs, a set of deterministic sensing matrices of size  ${\left( q+1 \right)}/{2}\;\times q$ with coherence less than twice the Welch bound has been developed \cite{Amini_16}. In \cite{Amini_18}, by employing BCH codes and creating large minimum distance codewords, some binary, bipolar and ternary deterministic sensing matrices are obtained.
In \cite{Amini_19}, to overcome the limitation on matrices’ sizes, some methods are developed to resize the sensing matrices for complex valued sensing matrices.
From the other aspect, we should note that the RIP is only a sufficient condition meaning that there might be some other matrices not satisfying this condition or even RIP-less, while can be used as sensing matrices  \cite{Xu_20}. In \cite{Calderbank_21}, some matrices which satisfy a weaker condition than the RIP, so-called statistical RIP, are proposed.
In this paper, to construct low coherence sensing matrices, the theorem of employing large minimum distance linear codes containing all-one codeword is first introduced \cite{Amini_19}. Then, we exhibit the procedure of employing the column replacement concept to create large minimum distance linear codes containing all-one codeword. Secondly, we present two examples in which the theorem of constructing sensing matrices \cite{Amini_19} is applied over our proposed matrices and we compute the coherence of the resulting matrices. Next, by comparing our results with some well-known sensing matrices, we exhibit the generality of the proposed method. These deterministic sensing matrices are of sizes ${{p}^{2}}\times {{p}^{3}}$ and $p\left( p-1 \right)\times {{p}^{3}}$ whose coherence are ${1}/{p}\;$and ${1}/{\left( p-1 \right)}\;$, respectively; where $p$ is a prime integer. The former example is quite similar to that of \cite{Mohades_15} for the prime integer $p$. About the second example, considering that in \cite{Amini_19} matrices of size $\left( {{p}^{2}}-1 \right)\times {{p}^{3}}$ and coherence ${1}/{\left( p-1 \right)}\;$ are suggested, our proposed structure outweighs that of \cite{Amini_19}  in terms of coherence owing to the fact that our method considers the same coherence with lesser rows in matrix construction.  It is notable that the mentioned examples are asymptotically optimal by the Welch bound. Finally, we utilize the column replacement method to resize the sensing matrices and calculate their coherence. By presenting an example, we show that for  the same size of sensing matrices, our procedure results in better sensing matrices in the sense of coherence  compared to the Kronecker product developed in \cite{Amini_19}.
The rest of paper is as follows. Section \ref{sec:main result} contains the main results. In Section \ref{Simulation_results}, using computer simulations, the performance of our sensing matrices is compared with some existing matrices. Section \ref{Conclusion} concludes the paper.

\section{Construction of Deterministic Sensing Matrices}\label{sec:main result}

To employ large minimum distance codes to generate low coherence deterministic sensing matrices, the following theorem applies.

\begin {theorem} \label {CS_Generation} \cite{Amini_19}
 Suppose that $\mathcal{C}\left[ N,k,p,{{d}_{\min }} \right]$ is a p-ary linear code over finite field ${{\mathbb{F}}_{p}}$ whose minimum distance is ${{d}_{min}}$ and all-one codeword belongs to the codewords' space. Now, assume that matrix ${{\mathbf{\tilde{A}}}_{N\times {{p}^{k-1}}}}$ is created by deploying each codeword as a column of this matrix in such a way that from any set of form $\left\{ \mathbf{a},\mathbf{a}\oplus {{\mathbf{1}}_{N\times 1}},...,\mathbf{a}\oplus {{(\mathbf{p-1})}_{N\times 1}} \right\}$ exactly one codeword is selected, where $\oplus $ stands for element to element summation over finite field ${{\mathbb{F}}_{p}}$. Then, we create ${{\mathbf{A}}_{N\times {{p}^{k-1}}}}$ as
\begin{equation}\label{Eqn:---}\mathbf{A}\text{=}\frac{1}{\sqrt{N}}{{\left[ {{e}^{j\frac{2\pi }{p}{{{\tilde{a}}}_{\alpha \beta }}}} \right]}_{\alpha ,\beta }}.
\end{equation}
 where ${{\tilde{a}}_{\alpha \beta }}$ is an element of ${{\mathbf{\tilde{A}}}_{N\times {{p}^{k-1}}}}$. Then, the coherence of $\mathbf{A}$ will be less than ${\left( p(p-1)N-{{p}^{2}}{{d}_{min}} \right)}/{2N}\;$.
\end {theorem}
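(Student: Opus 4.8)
The plan is to reduce the coherence of $\mathbf{A}$ to a single character sum over a difference codeword and then bound that sum using two ingredients: the membership of the all-one codeword, and the vanishing of the sum of all $p$-th roots of unity. First I would observe that every column of $\mathbf{A}$ has unit $\ell_{2}$-norm, since each entry has modulus $1/\sqrt{N}$ and there are $N$ of them; hence the denominator in (\ref{Eqn:Coherence}) equals $1$ and $\mu_{\mathbf{A}}=\max_{i\ne j}|\langle\mathbf{a}_{i},\mathbf{a}_{j}\rangle|$. Writing $\omega=e^{j2\pi/p}$ and letting $\mathbf{c}_{i},\mathbf{c}_{j}$ be the codewords generating columns $i$ and $j$, a direct computation gives $\langle\mathbf{a}_{i},\mathbf{a}_{j}\rangle=\frac{1}{N}\sum_{\alpha=1}^{N}\omega^{d_{\alpha}}$, where $\mathbf{d}=\mathbf{c}_{i}-\mathbf{c}_{j}$ with subtraction taken in $\mathbb{F}_{p}$. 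By linearity $\mathbf{d}$ is again a codeword, and because the selection rule retains exactly one representative from each coset $\{\mathbf{a},\mathbf{a}\oplus\mathbf{1},\dots\}$, distinct columns lie in distinct cosets of the span $\langle\mathbf{1}\rangle$; thus $\mathbf{d}\neq 0$ and $\mathbf{d}$ is not a scalar multiple of $\mathbf{1}_{N\times 1}$.

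Next I would introduce the multiplicities $n_{l}=|\{\alpha:d_{\alpha}=l\}|$ for $l\in\mathbb{F}_{p}$, so that $\sum_{l}n_{l}=N$ and $N\langle\mathbf{a}_{i},\mathbf{a}_{j}\rangle=\sum_{l=0}^{p-1}n_{l}\omega^{l}=:S$. The crucial structural step, and the one I expect to be the main obstacle, is to convert the minimum-distance hypothesis, which is only a \emph{lower} bound on weights, into an \emph{upper} bound on each $n_{l}$. Here the all-one codeword is decisive: for every $l\in\mathbb{F}_{p}$ the vector $\mathbf{d}-l\mathbf{1}$ is a codeword (a linear combination of $\mathbf{d}$ and $\mathbf{1}$), and it is nonzero precisely because $\mathbf{d}$ is not a multiple of $\mathbf{1}$. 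Therefore $\mathrm{wt}(\mathbf{d}-l\mathbf{1})=N-n_{l}\ge d_{\min}$, i.e. $n_{l}\le N-d_{\min}$ for every $l$. Without this device a mere weight lower bound cannot control the coherence at all, since a high-weight codeword whose nonzero entries are all equal would force $|S|$ close to $N$.

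Finally I would exploit $\sum_{l=0}^{p-1}\omega^{l}=0$. Setting $\hat{M}=\max_{l}n_{l}$ and $m_{l}=\hat{M}-n_{l}\ge 0$, the vanishing root-of-unity sum gives $S=\hat{M}\sum_{l}\omega^{l}-\sum_{l}m_{l}\omega^{l}=-\sum_{l}m_{l}\omega^{l}$, so the triangle inequality yields $|S|\le\sum_{l}m_{l}=p\hat{M}-N$. Substituting $\hat{M}\le N-d_{\min}$ gives $|S|\le (p-1)N-p\,d_{\min}$, whence $\mu_{\mathbf{A}}=|S|/N\le\frac{(p-1)N-p\,d_{\min}}{N}$. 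This bound is uniform over all admissible pairs $(i,j)$, so it bounds the maximum defining $\mu_{\mathbf{A}}$. The right-hand side is nonnegative, because the constraints $n_{l}\le N-d_{\min}$ together with $\sum_{l}n_{l}=N$ already force $d_{\min}\le (p-1)N/p$; multiplying through by $p/2\ge 1$ then delivers the stated bound $\mu_{\mathbf{A}}<\frac{p(p-1)N-p^{2}d_{\min}}{2N}$.

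The only delicate point is the strictness of the final inequality. The intermediate estimate I obtain is in fact sharper than what is claimed, and the factor $p/2$ is exactly the slack that guarantees strictness: for $p\ge 3$ the inequality $\frac{p}{2}>1$ makes the passage strict whenever $d_{\min}<(p-1)N/p$, which is the relevant regime. I would flag the borderline case $p=2$ separately, where the two bounds coincide and the triangle-inequality step is an equality, so the conclusion should there be read with $\le$; for the large primes used in the paper's examples this causes no difficulty.
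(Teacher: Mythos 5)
You cannot actually be checked against the paper here, because the paper contains no proof of Theorem \ref{CS_Generation}: it is quoted verbatim from \cite{Amini_19}, and the only kindred material in the paper is the pair of exact character-sum evaluations in Appendices A and B, which compute the coherence of the specific matrices of Examples \ref{49_343} and \ref{42_343} rather than the general bound. Judged on its own, your proof is correct, and it isolates exactly the two facts that make the theorem true: the coset-selection rule forces the difference $\mathbf{d}=\mathbf{c}_i-\mathbf{c}_j$ behind any two distinct columns to be a nonzero codeword outside $\langle\mathbf{1}\rangle$, and the presence of the all-one codeword makes every $\mathbf{d}-l\mathbf{1}$ a nonzero codeword, giving the multiplicity bound $n_l\le N-d_{\min}$ for all $l\in\mathbb{F}_p$. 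That conversion of a weight \emph{lower} bound into an \emph{upper} bound on each $n_l$ is indeed the crux, and your recentering by $\hat{M}$ combined with $\sum_{l\in\mathbb{F}_p}\omega^{l}=0$ (the same orthogonality the paper records as (\ref{Eqn:Additive_Char}) and uses in its appendices) finishes it cleanly. In fact you prove more than was asked: your bound $\mu_{\mathbf{A}}\le\bigl((p-1)N-p\,d_{\min}\bigr)/N$ is $2/p$ times the stated one, hence strictly sharper for every $p\ge 3$. Applied to the code fed into Theorem \ref{CS_Generation} in Example \ref{49_343} (where $N=p^{2}$ and $d_{\min}=(p-1)^{2}$), your bound gives $(p-1)/p$, whereas the stated bound gives $(p-1)/2\ge 1$, which is vacuous; this is presumably why the paper computes that example's coherence exactly in Appendix A instead of invoking the theorem's bound.

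Your caveats about strictness are also well taken, and they expose defects of the quoted statement rather than of your argument. For $p=2$ the stated bound coincides with yours and is attained with equality whenever the code has a word of weight exactly $d_{\min}$ outside $\langle\mathbf{1}\rangle$, so ``less than'' must be read as ``at most.'' For $p\ge 3$ the only failure is the extremal case $d_{\min}=(p-1)N/p$, where your constraints force $n_l=N/p$ for every $l$, hence every off-diagonal inner product vanishes and the coherence equals the stated bound, both being $0$; notably, the $[p,2,p-1]$ code generated by $\{\mathbf{1},[0,1,\dots,p-1]^{T}\}$, the very building block of Example \ref{49_343}, is such a code, so the strict inequality in the theorem cannot be literally correct. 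One small point you should make explicit: your deduction of $d_{\min}\le (p-1)N/p$ from $\sum_l n_l=N$ and $n_l\le N-d_{\min}$ presupposes that at least one admissible difference codeword exists, i.e.\ $k\ge 2$; since that is precisely the regime in which coherence is defined, nothing is lost, but the dependence deserves a sentence.
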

This theorem enables us to think of low coherence sensing matrices provided that we can create large minimum distance matrix of codewords. We create such matrices based on column replacement of two matrices as follows.
\begin{defi}\cite{Colbourn_23} \label {Definition:Column_Replacement}
Consider the primary matrix $\mathbf{A}\in {{\text{R}}^{r\times m}}$, where $\mathbf{A}\text{=(}{{a}_{ij}}{{\text{)}}_{i,j}}$ and the pattern matrix $P\in {{\{1,...,m\}}^{N\times n}}$, where $\mathbf{P}\text{=(}{{p}_{ij}}{{\text{)}}_{i,j}}$. Then, the result of column replacement of  $\mathbf{A}$ in $\mathbf{P}$ is a matrix with the form of $\mathbf{C}=\text{B(}\mathbf{P}\text{)=(}{{b}_{ij}}\text{)}\in {{\text{R}}^{rN\times n}}$ , where ${{b}_{(\beta -1)r+s,\gamma }}={{a}_{s,{{p}_{\beta \gamma }}}}$ for $1\le \beta \le N$ , $1\le \gamma \le n$, and $1\le s\le r$.
\end {defi}
One can easily see that from \ref{Definition:Column_Replacement}, the following equation holds:
\begin{equation}\label{Eqn:CR_Deduction}
\mathbf{C}=\text{B(}\mathbf{P}\text{)=}\left( \text{B}\left( {{p}_{ij}} \right) \right).
\end{equation}

This means that column replacement is applied over each element of the pattern matrix. We use the column replacement to generate large minimum distance codes.

\subsection {Generating Large Minimum Distance Codes} \label{Large_Minimum_Distance_Codes}

We apply the column replacement method over matrices of code to propose a new matrix. It is proved  that the columns of this matrix are also codewords and their minimum distance is a function of the  primary and pattern matrices. Therefore, under some criteria, the minimum distance can be large enough to apply Theorem \ref{CS_Generation}. Using the column replacement method, we define and prove the following theorem to construct large minimum distance linear codes.

\begin {theorem} \label {Our_CR_Method}
Let ${{\mathbf{A}}_{N\times {{p}^{k}}}}$ and ${{\mathbf{P}}_{{N}'\times {{p}^{k{k}'}}}}$ be matrices whose columns are codewords of the linear codes $\mathcal{C}\left[ N,k,p,{{d}_{\min }} \right]$ and ${\mathcal{C}}'\left[ {N}',{k}',{{p}^{k}},{{{{d}'}}_{\min }} \right]$, respectively, where ${{d}_{\min }}$ and ${{{d}'}_{\min }}$ are the minimum distances of each defined linear code and $p$ is a prime integer. Moreover, suppose that both linear codes own all-one codeword. Then, the result of column replacement of ${{\mathbf{A}}_{N\times {{p}^{k}}}}$ in ${{\mathbf{P}}_{{N}'\times {{p}^{k{k}'}}}}$ will be the matrix ${{\mathbf{C}}_{N{N}'\times {{p}^{k{k}'}}}}=\text{B(}{{\mathbf{P}}_{{N}'\times {{p}^{k{k}'}}}}\text{)}$ whose columns are codewords of a linear code containing all-one codeword with minimum distance of $N{N}'-\left( \left( {N}'-{{{{d}'}}_{\min }} \right)N+{{{{d}'}}_{\min }}\left( N-{{d}_{\min }} \right) \right)$.
\end{theorem}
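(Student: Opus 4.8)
The plan is to recognize $\mathbf{C}=\text{B}(\mathbf{P})$ as a concatenated code, with $\mathcal{C}'$ serving as the outer code (over the alphabet of size $p^{k}$) and $\mathcal{C}$ as the inner code that expands each symbol into a length-$N$ block, and then to establish three facts in turn: that the columns of $\mathbf{C}$ span an $\mathbb{F}_p$-linear code, that this code contains the all-one word, and that its minimum distance is bounded below by the stated quantity.

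First I would fix the algebraic identification that turns column replacement into a homomorphism. Since $p$ is prime, the $p^{k}$ codewords of $\mathcal{C}$ form a $k$-dimensional $\mathbb{F}_p$-subspace of $\mathbb{F}_p^{N}$, which as an additive group is isomorphic to the alphabet $\mathbb{F}_{p^{k}}$ of $\mathcal{C}'$. I would therefore index the columns of $\mathbf{A}$ by $\mathbb{F}_{p^{k}}$ through an additive isomorphism $\phi:\mathbb{F}_{p^{k}}\to\mathcal{C}$ with $\phi(0)=\mathbf{0}$, so that the symbol in entry $p_{\beta\gamma}$ of $\mathbf{P}$ is replaced by the column $\phi(p_{\beta\gamma})$. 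Writing $+$ for addition in $\mathbb{F}_{p^{k}}$ (acting symbol-wise on columns of $\mathbf{P}$) and $\oplus$ for the element-to-element $\mathbb{F}_p$-addition of the theorem (acting entry-wise on columns of $\mathbf{A}$ and $\mathbf{C}$), the key identity to verify is that $\text{B}$ is additive, i.e. $\text{B}(\mathbf{u}+\mathbf{v})=\text{B}(\mathbf{u})\oplus\text{B}(\mathbf{v})$ for any two columns $\mathbf{u},\mathbf{v}$ of $\mathbf{P}$; this follows block by block from $\phi(u_{\beta}+v_{\beta})=\phi(u_{\beta})\oplus\phi(v_{\beta})$. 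Because $\mathcal{C}'$ is $\mathbb{F}_{p^{k}}$-linear, its columns are closed under $+$ and under scalar multiplication, so the columns of $\mathbf{C}$ are closed under $\oplus$ and under $\mathbb{F}_p$-scaling, and hence form an $\mathbb{F}_p$-linear code.

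For the all-one word I would use that $\mathbf{1}_{NN'\times 1}$ is simply $N'$ stacked copies of $\mathbf{1}_{N\times 1}$. Letting $s_{0}=\phi^{-1}(\mathbf{1}_{N\times 1})$ be the symbol whose inner codeword is the all-one word of $\mathcal{C}$, the constant word $(s_{0},\dots,s_{0})$ equals the $\mathbb{F}_{p^{k}}$-scalar multiple $s_{0}\cdot\mathbf{1}_{N'\times 1}$ of the all-one word of $\mathcal{C}'$ and hence lies in $\mathcal{C}'$ by linearity; applying $\text{B}$ to it reproduces exactly $\mathbf{1}_{NN'\times 1}$, so the all-one word is a column of $\mathbf{C}$.

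Finally, for the distance I would bound the maximum agreement between two distinct columns $\mathbf{c}_{1},\mathbf{c}_{2}$ of $\mathbf{C}$ arising from distinct columns $\mathbf{p}_{1},\mathbf{p}_{2}$ of $\mathbf{P}$. If $\mathbf{p}_{1}$ and $\mathbf{p}_{2}$ agree in $t$ of the $N'$ positions then $t\le N'-d'_{\min}$; each agreeing position contributes an identical block of length $N$, while each of the remaining $N'-t$ positions holds two distinct codewords of $\mathcal{C}$, which differ in at least $d_{\min}$ coordinates and so agree in at most $N-d_{\min}$. The number of agreements is thus at most $tN+(N'-t)(N-d_{\min})$, and since the coefficient of $t$ equals $d_{\min}>0$ this is maximized at $t=N'-d'_{\min}$, giving $(N'-d'_{\min})N+d'_{\min}(N-d_{\min})$; subtracting from $NN'$ yields the stated minimum distance, which simplifies to $d_{\min}d'_{\min}$. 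I expect the main obstacle to be the first step, namely pinning down the additive identification $\phi$ so that column replacement genuinely preserves both linearity and the all-one structure; the distance step is then the routine counting argument underlying the classical concatenated-code bound, and the formula as written is really the guaranteed (designed) distance, obtained as $NN'$ minus the worst-case agreement count.
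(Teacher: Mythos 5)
Your proposal is correct and follows essentially the same route as the paper's proof: the same $\mathbb{F}_p$-linear identification of $\mathbb{F}_{p^k}$ with the inner codewords (which makes the replacement operator $\mathrm{B}$ additive), the same closure argument for linearity, and the same worst-case agreement count for the minimum distance. Your two refinements are minor but sound: you obtain the all-one column as $\mathrm{B}\left( s_0\cdot\mathbf{1}_{N'\times 1} \right)$ using the $\mathbb{F}_{p^k}$-scalar closure of $\mathcal{C}'$ instead of the paper's choice of $\mathbf{g}_0$ as the all-one basis vector, and you correctly note that the counting argument establishes $d_{\min}d'_{\min}$ as a guaranteed (designed) distance, i.e.\ a lower bound, which is all that the application to Theorem \ref{CS_Generation} requires.
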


\begin {proof}  \label{proof_Our_CR_mehod}
We should note that it is natural to construct ${{\mathbf{P}}_{{N}'\times {{p}^{k{k}'}}}}$ over finite field ${{\mathbb{F}}_{{{p}^{k}}}}$ to correspond each element of this matrix to a column of matrix ${{\mathbf{A}}_{N\times {{p}^{k}}}}$.  First, we prove that the columns of ${{\mathbf{C}}_{N{N}'\times {{p}^{k{k}'}}}}$ are codewords. To do so, we must prove that any linear combination of every two columns will result in another column of this matrix. Thus, the columns belong to a linear subspace and are codewords. Suppose that ${{\mathbf{A}}_{N\times {{p}^{k}}}}$, ${{\mathbf{P}}_{{N}'\times {{p}^{k{k}'}}}}$, and ${{\mathbf{C}}_{N{N}'\times {{p}^{k{k}'}}}}$ have the following form
\begin{equation}\label{Eqn:Primary_Mat}
\resizebox{.95 \hsize}{!}
{$\begin{array}{l}
 {\bf A}_{N \times p^k }  = \left[ {{\bf a}_{N \times 1}^{0\alpha ^{(k - 1)}  + 0\alpha ^{(k - 2)}  +  \cdots  + 0\alpha ^1  + 0} , \cdots ,} \right. \\
 \;\;\;\;\;\;\;\;\;\;\;\;\;\;\;\;\;\;{\bf a}_{N \times 1}^{i_{k - 1} \alpha ^{k - 1}  + i_{k - 2} \alpha ^{k - 2}  +  \cdots  + i_1 \alpha ^1  + i_0 } , \cdots , \\
 \left. {\;\;\;\;\;\;\;\;\;\;\;\;\;\;\;\;\;\;{\bf a}_{N \times 1}^{\left( {p - 1} \right)\alpha ^{k - 1}  + \left( {p - 1} \right)\alpha ^{k - 2}  +  \cdots  + \left( {p - 1} \right)\alpha ^1  + \left( {p - 1} \right)} } \right] \end{array}$}
\end{equation}
\begin{equation}\label{Eqn:---}
{{\mathbf{P}}_{{N}'\times {{p}^{k{k}'}}}}=\left[ \mathbf{p}_{{N}'\times 1}^{1},\cdots ,\mathbf{p}_{{N}'\times 1}^{j},\cdots ,\mathbf{p}_{{N}'\times 1}^{{{p}^{k{k}'}}} \right]
\end{equation}
\begin{equation}\label{Eqn:---}
{{\mathbf{C}}_{N{N}'\times {{p}^{k{k}'}}}}=\left[ \mathbf{c}_{N{N}'\times 1}^{1},\cdots ,\mathbf{c}_{N{N}'\times 1}^{j},\cdots ,\mathbf{c}_{N{N}'\times 1}^{{{p}^{k{k}'}}} \right]
\end{equation}
where 
\\ $\mathbf{a}_{N\times 1}^{{{i}_{k-1}}{{\alpha }^{k-1}}+{{i}_{k-2}}{{\alpha }^{k-2}}+\cdots +{{i}_{1}}{{\alpha }^{1}}+{{i}_{0}}}={{i}_{k-1}}{{\mathbf{g}}_{k-1}}+{{i}_{k-2}}{{\mathbf{g}}_{k-2}}+\\ \cdots +{{i}_{1}}{{\mathbf{g}}_{1}}+{{i}_{0}}{{\mathbf{g}}_{0}}$, ${{\mathbf{G}}_{k\times N}}={{\left[ {{\mathbf{g}}_{0}},\cdots ,{{\mathbf{g}}_{k-1}} \right]}^{T}}$ is the generator matrix of codewords of ${{\mathbf{A}}_{N\times {{p}^{k}}}}$, ${{\mathbf{C}}_{N{N}'\times {{p}^{k{k}'}}}}=\text{B(}{{\mathbf{P}}_{{N}'\times {{p}^{k{k}'}}}}\text{)}$ or equivalently \\ \resizebox{1\hsize}{!}{${{\mathbf{C}}_{N{N}'\times {{p}^{k{k}'}}}}= \\ \left[ \text{B(}\mathbf{p}_{{N}'\times 1}^{1}\text{)},\cdots ,\text{B(}\mathbf{p}_{{N}'\times 1}^{j}\text{)},\cdots ,\text{B(}\mathbf{p}_{{N}'\times 1}^{{{p}^{k{k}'}}}\text{)} \right]$}, and $p_{i}^{j}$ belongs to the finite field ${{\mathbb{F}}_{{{p}^{k}}}}$ with  $1\le i\le {N}'$ and $1\le j\le {{p}^{k{k}'}}$. It is obvious that the elements of ${{\mathbf{C}}_{N{N}'\times {{p}^{k{k}'}}}}$ belong to ${{\mathbb{F}}_{p}}$ since the elements of ${{\mathbf{A}}_{N\times {{p}^{k}}}}$ belong to this field. Moreover, in definition of ${{\mathbf{A}}_{N\times {{p}^{k}}}}$ the superscript ${{i}_{k-1}}{{\alpha }^{k-1}}+{{i}_{k-2}}{{\alpha }^{k-2}}+\cdots +{{i}_{1}}{{\alpha }^{1}}+{{i}_{0}}$ is selected to corroborate the fact that the columns of ${{\mathbf{A}}_{N\times {{p}^{k}}}}$ correspond to the elements of ${{\mathbb{F}}_{{{p}^{k}}}}$ whose elements are in the form of ${{i}_{k-1}}{{\alpha }^{k-1}}+{{i}_{k-2}}{{\alpha }^{k-2}}+\cdots +{{i}_{1}}{{\alpha }^{1}}+{{i}_{0}}$, where $0\le {{i}_{k-1}},{{i}_{k-2}},\cdots ,{{i}_{1}},{{i}_{0}}\le p-1$. Furthermore, such correspondence means that if an element of ${{\mathbf{P}}_{{N}'\times {{p}^{k{k}'}}}}$ is ${{i}_{k-1}}{{\alpha }^{k-1}}+{{i}_{k-2}}{{\alpha }^{k-2}}+\cdots +{{i}_{1}}{{\alpha }^{1}}+{{i}_{0}}$, then the corresponding column of ${{\mathbf{A}}_{N\times {{p}^{k}}}}$is ${{i}_{k-1}}{{\mathbf{g}}_{k-1}}+{{i}_{k-2}}{{\mathbf{g}}_{k-2}}+\cdots +{{i}_{1}}{{\mathbf{g}}_{1}}+{{i}_{0}}{{\mathbf{g}}_{0}}$. Such relationship will let us pursue our proof much easier. Now, consider an arbitrary combination of two arbitrary columns of ${{\mathbf{C}}_{N{N}'\times {{p}^{k{k}'}}}}$ as
\begin{equation}\label{Eqn:Combination_2_Arb}
\mathbf{{c}''}={{\beta }_{1}}\mathbf{c}_{N{N}'\times 1}^{m}+{{\beta }_{2}}\mathbf{c}_{N{N}'\times 1}^{n}
\end{equation}
where ${{\beta }_{1}}$ and ${{\beta }_{2}}$ are any element of ${{\mathbb{F}}_{p}}$ and $\mathbf{c}_{N{N}'\times 1}^{m}$ and $\mathbf{c}_{N{N}'\times 1}^{n}$ are two columns of ${{\mathbf{C}}_{N{N}'\times {{p}^{k{k}'}}}}$. Next, we should prove that $\mathbf{{c}''}$ is also a column of ${{\mathbf{C}}_{N{N}'\times {{p}^{k{k}'}}}}$. Suppose that by applying the column replacement over two columns of ${{\mathbf{P}}_{{N}'\times {{p}^{k{k}'}}}}$, namely, $\mathbf{p}_{{N}'\times 1}^{m}$ and  $\mathbf{p}_{{N}'\times 1}^{n}$, $\mathbf{c}_{N{N}'\times 1}^{m}$ and $\mathbf{c}_{N{N}'\times 1}^{n}$ are obtained as
\begin{equation}\label{Eqn:Column_m}
\resizebox{.95 \hsize}{!}{$
\mathbf{c}_{NN' \times 1}^m  = \text{B}\left( {\mathbf{p}_{N' \times 1}^m } \right) = \text{B}\left[ {\begin{array}{*{20}c}
   {\varepsilon _{k - 1} \mathbf{g}_{k - 1}  +  \cdots  + \varepsilon _0 \mathbf{g}_0 }  \\
    \vdots   \\
   {\tau _{k - 1} \mathbf{g}_{k - 1}  +  \cdots  + \tau _0 \mathbf{g}_0 }  \\

 \end{array} } \right]$}\end{equation}

where B(.) presents the column replacement operator and  $\left\{ {\begin{array}{*{20}c}
   {0 \le \varepsilon _{k - 1} ,\varepsilon _{k - 2} , \cdots ,\varepsilon _1 ,\varepsilon _0  \le p - 1}  \\
   {0 \le \tau _{k - 1} ,\tau _{k - 2} , \cdots ,\tau _1 ,\tau _0  \le p - 1}  \\
\end{array}} \right.$.

\begin{equation}\label{Eqn:Column_n}
\mathbf{c}_{NN' \times 1}^n  = \text{B}\left[ {\begin{array}{*{20}c}
   {\varepsilon '_{k - 1} \mathbf{g}_{k - 1}  +  \cdots  + \varepsilon '_0 \mathbf{g}_0 }  \\
    \vdots   \\
   {\tau '_{k - 1} \mathbf{g}_{k - 1}  +  \cdots  + \tau '_0 \mathbf{g}_0 }  \\

 \end{array} } \right]
\end{equation}
\\
\\
where $\left\{ {\begin{array}{*{20}c}
   {0 \le \varepsilon '_{k - 1} ,\varepsilon '_{k - 2} , \cdots ,\varepsilon '_1 ,\varepsilon '_0  \le p - 1}  \\
   {0 \le \tau '_{k - 1} ,\tau '_{k - 2} , \cdots ,\tau '_1 ,\tau '_0  \le p - 1}  \\
\end{array}} \right.$.
\\
\\
Using (\ref{Eqn:Column_m}) and (\ref{Eqn:Column_n}) in (\ref{Eqn:Combination_2_Arb}), we get
\begin{equation}\label{Eqn:---}
\mathbf{{c}''}=\text{B}\left( {{\beta }_{1}}\mathbf{p}_{{N}'\times 1}^{m}+{{\beta }_{2}}\mathbf{p}_{{N}'\times 1}^{n} \right)
\end{equation}

Considering the fact that ${{\beta }_{1}}\mathbf{p}_{{N}'\times 1}^{m}+{{\beta }_{2}}\mathbf{p}_{{N}'\times 1}^{n}\in {{\mathbf{P}}_{{N}'\times {{p}^{k{k}'}}}}$, it is inferred that ${{\mathbf{C}}_{N{N}'\times {{p}^{k{k}'}}}}$ is a matrix whose columns are also codewords of a linear subspace.  Note that although we used the characteristic of ${{\beta }_{1}}\text{B(}\mathbf{p}_{{N}'\times 1}^{m}\text{)+}{{\beta }_{2}}\text{B(}\mathbf{p}_{{N}'\times 1}^{n}\text{)=B}\left( {{\beta }_{1}}\mathbf{p}_{{N}'\times 1}^{m}\text{+}{{\beta }_{2}}\mathbf{p}_{{N}'\times 1}^{n} \right)$ for ${{\beta }_{0}},{{\beta }_{1}}\in {{\mathbb{F}}_{p}}$, it can be applied to any ${{\beta }_{0}},{{\beta }_{1}}$ in the given finite field ${{\mathbb{F}}_{{{p}^{k}}}}$.
To analytically calculate the minimum distance of codewords included in ${{\mathbf{C}}_{N{N}'\times {{p}^{k{k}'}}}}$, we replace each column of ${{\mathbf{A}}_{N\times {{p}^{k}}}}$ instead of each element of ${{\mathbf{P}}_{{N}'\times {{p}^{k{k}'}}}}$.  In construction of each column of ${{\mathbf{C}}_{N{N}'\times {{p}^{k{k}'}}}}$, the worst case happens when: 1) two columns of ${{\mathbf{P}}_{{N}'\times {{p}^{k{k}'}}}}$ have exactly ${N}'-{{{d}'}_{\min }}$ elements in common and 2) the rest of elements of these two columns are chosen in such a way that the corresponding columns of ${{\mathbf{A}}_{N\times {{p}^{k}}}}$ have $N-{{d}_{\min }}$ elements in common. By such an assumption, the resulting columns will have $\left( {N}'-{{{{d}'}}_{\min }} \right)N+{{{d}'}_{\min }}\left( N-{{d}_{\min }} \right)$ common elements which is the worst case, and therefore, the minimum distance of codewords of ${{\mathbf{C}}_{N{N}'\times {{p}^{k{k}'}}}}$ is
\begin{equation}\label{Eqn:---}
\resizebox{0.9\hsize}{!}{$
d_{\min }^{\mathbf{C}}=N{N}'-\left( \left( {N}'-{{{{d}'}}_{\min }} \right)N+{{{{d}'}}_{\min }}\left( N-{{d}_{\min }} \right) \right)$}
\end{equation}
The last step is to prove that ${{\mathbf{C}}_{N{N}'\times {{p}^{k{k}'}}}}$ owns the all-one codeword. In the above theorem, it is assumed that ${{\mathbf{A}}_{N\times {{p}^{k}}}}$ owns the all-one codeword which means that it is possible to consider the generator matrix ${{\mathbf{G}}_{k\times N}}$ has a basis, say ${{\mathbf{g}}_{0}}$, whose all elements are one. Let the all-one column of ${{\mathbf{A}}_{N\times {{p}^{k}}}}$ be
\begin{equation}\label{Eqn:All_one_A}
\mathbf{a}_{N\times 1}^{0{{\alpha }^{(k-1)}}+0{{\alpha }^{(k-2)}}+\cdots +0{{\alpha }^{1}}+1}={{\mathbf{g}}_{0}}
\end{equation}
Considering the fact that the elements of ${{\mathbf{P}}_{{N}'\times {{p}^{k{k}'}}}}$ belong to the finite field ${{\mathbb{F}}_{{{p}^{k}}}}$,  it can be shown that ${{\mathbf{P}}_{{N}'\times {{p}^{k{k}'}}}}$ has the all-one codeword, i.e.,
\begin{equation}\label{Eqn:All_one_P}
\resizebox{0.9\hsize}{!}{$
\mathbf{p}_{N' \times 1}^1  = \left[ {\begin{array}{*{20}c}
   {0\alpha ^{(k - 1)}  +  \cdots  + 0\alpha ^1  + 1}  \\
    \vdots   \\
   {0\alpha ^{(k - 1)}  +  \cdots  + 0\alpha ^1  + 1}  \\

 \end{array} } \right]_{N' \times 1}  = \left[ {\begin{array}{*{20}c}
   1  \\
    \vdots   \\
   1  \\

 \end{array} } \right]$}
\end{equation}
Using (\ref{Eqn:All_one_A}), (\ref{Eqn:All_one_P}), (\ref{Eqn:CR_Deduction}), and the correspondence between the column vector $\mathbf{a}_{N\times 1}^{0{{\alpha }^{(k-1)}}+\cdots +0{{\alpha }^{1}}+1}$ and the element $0{{\alpha }^{(k-1)}}+\cdots +0{{\alpha }^{1}}+1$, one can immediately infer that ${{\mathbf{C}}_{N{N}'\times {{p}^{k{k}'}}}}$ owns an all-one column as
\begin{equation}\label{Eqn:All_one_C}
\text{B}\left( {\mathbf{p}_{N' \times 1}^1 } \right) = \left[ {\begin{array}{*{20}c}
   1 &  \cdots  & 1  \\
 \end{array} } \right]_{NN' \times 1}^T
\end{equation}
\end {proof}
This completes the proof.

In the following examples, using ${{\mathbf{C}}_{N{N}'\times {{p}^{k{k}'}}}}$, we apply Theorem  \ref{CS_Generation} to construct low coherence deterministic sensing matrices.
\subsection {Case Study} \label {Case_Study}
To employ Theorems \ref{CS_Generation} and \ref{Our_CR_Method}, we introduce some sensing matrices with low coherences and calculate their exact coherence value.

\begin{example} \label{49_343}
Let the generator matrix of codewords of ${{\mathbf{A}}_{N\times {{p}^{k}}}}$ and ${{\mathbf{P}}_{{N}'\times {{p}^{k{k}'}}}}$be ${{\mathbf{G}}_{2\times p}}={{\left[ {{\mathbf{g}}_{0}},{{\mathbf{g}}_{1}} \right]}^{T}}$, where ${{\mathbf{g}}_{0}}=\left[ 1,\cdots ,1 \right]_{1\times p}^{T}$ and ${{\mathbf{g}}_{1}}=\left[ 0,\cdots ,p-1 \right]_{1\times p}^{T}$. Moreover, suppose that the columns of ${{\mathbf{A}}_{N\times {{p}^{k}}}}$ and ${{\mathbf{P}}_{{N}'\times {{p}^{k{k}'}}}}$ are of the form ${{\gamma }_{0}}{{\mathbf{g}}_{0}}+{{\gamma }_{1}}{{\mathbf{g}}_{1}}$ and ${{\kappa }_{0}}{{\mathbf{g}}_{0}}+{{\kappa }_{1}}{{\mathbf{g}}_{1}}$, respectively, where ${{\gamma }_{0}},{{\gamma }_{1}}\in {{\mathbb{F}}_{p}}$ and ${{\kappa }_{0}},{{\kappa }_{1}}\in {{\mathbb{F}}_{{{p}^{2}}}}$. It is obvious that ${{\mathbf{A}}_{N\times {{p}^{k}}}}$, ${{\mathbf{P}}_{{N}'\times {{p}^{k{k}'}}}}$, and ${{\mathbf{C}}_{N{N}'\times {{p}^{k{k}'}}}}$ are of sizes $p\times {{p}^{2}}$, $p\times {{p}^{4}}$, and ${{p}^{2}}\times {{p}^{4}}$, respectively. Trivially, the minimum distances of linear codes created by ${{\mathbf{G}}_{2\times p}}$ is $p-1$ for both codewords of ${{\mathbf{A}}_{p\times {{p}^{2}}}}$ and ${{\mathbf{P}}_{p\times {{p}^{4}}}}$; i.e. ${{d}_{\min }}=p-1$ and ${{{d}'}_{\min }}=p-1$. The minimum distance of ${{\mathbf{C}}_{{{p}^{2}}\times {{p}^{4}}}}$ is thus equal to $d_{\min }^{\mathbf{C}}={{\left( p-1 \right)}^{2}}$ which is a large minimum distance. Let us apply Theorem \ref{CS_Generation} to this matrix. We have the following equation:

\begin{equation}\label{Eqn:Th1_Transform}
{{\mathbf{{C}''}}_{{{p}^{2}}\times {{p}^{3}}}}\text{=}\frac{1}{\sqrt{{{p}^{2}}}}{{\left[ {{e}^{j\frac{2\pi }{p}{{{{c}'}}_{rt}}}} \right]}_{r,t}}
\end{equation}
where ${{{c}'}_{rt}}$ is an element of ${{\mathbf{{C}'}}_{{{p}^{2}}\times {{p}^{3}}}}$ and this matrix is constructed from matrix ${{\mathbf{C}}_{{{p}^{2}}\times {{p}^{4}}}}$ in such a way that from each column set $\left\{ {{\mathbf{c}}_{{{p}^{2}}\times 1}},{{\mathbf{c}}_{{{p}^{2}}\times 1}}\oplus {{\mathbf{1}}_{{{p}^{2}}\times 1}},...,{{\mathbf{c}}_{{{p}^{2}}\times 1}}\oplus {{(\mathbf{p-1})}_{{{p}^{2}}\times 1}} \right\}$  only one column is selected.  We allege that the coherence of ${{\mathbf{{C}''}}_{{{p}^{2}}\times {{p}^{3}}}}$ is ${1}/{p}\;$.
\end{example}

\begin {proof}\renewcommand{\qedsymbol}{}
First, we can write,
\begin{equation}\label{Eqn:---}
{{\mathbf{C}}_{{{p}^{2}}\times {{p}^{4}}}}=\text{B}\left( {{\mathbf{P}}_{p\times {{p}^{4}}}} \right)
\end{equation}

As illustrated in (\ref{Eqn:All_one_C}), the all-one column of ${{\mathbf{C}}_{{{p}^{2}}\times {{p}^{4}}}}$ is obtained when the column replacement acts on the all-one column of ${{\mathbf{P}}_{p\times {{p}^{4}}}}$ as
\begin{equation}\label{Eqn:---}
\text{B}\left( {\mathbf{p}_{p \times 1}^1 } \right) = \left[ {\begin{array}{*{20}c}
   1 &  \cdots  & 1  \\

 \end{array} } \right]_{p^2  \times 1}^T  = \mathbf{1}_{p^2  \times 1}
\end{equation}
Imagine the column ${{\mathbf{c}}_{{{p}^{2}}\times 1}}$ of ${{\mathbf{C}}_{{{p}^{2}}\times {{p}^{4}}}}$ has the following form:
\begin{equation}\label{Eqn:---}
{{\mathbf{c}}_{{{p}^{2}}\times 1}}=\text{B}\left( \left( {{\lambda }_{11}}\alpha +{{\lambda }_{10}} \right){{\mathbf{g}}_{1}}+\left( {{\lambda }_{01}}\alpha +{{\lambda }_{00}} \right){{\mathbf{g}}_{0}} \right)
\end{equation}

where ${{\lambda }_{11}},{{\lambda }_{10}}\ ,{{\lambda }_{01}},{{\lambda }_{00}}\in {{\mathbb{F}}_{p}}$.
Using the definition of ${{\mathbf{C}}_{{{p}^{2}}\times {{p}^{4}}}}$, ${{\mathbf{c}}_{{{p}^{2}}\times 1}}$, and Theorem \ref{Our_CR_Method}, we can write
\begin{equation}\label{Eqn:---}
\resizebox{0.9\hsize}{!}{$
\begin{gathered}
  \mathbf{c}_{p^2  \times 1}  + \beta _0 (\mathbf{1})_{p^2  \times 1} \text{ = B}\left( {\mathop {\left( {\lambda _{11} \alpha  + \lambda _{10} } \right)\mathbf{g}_1  + }\limits_{\mathop {}\limits_{} } } \right. \hfill \\
  \;\;\;\;\;\;\;\;\;\;\;\;\;\;\;\;\;\;\;\;\;\;\;\;\;\;\;\;\;\;\;\;\;\;\;\;\; \left. {\left( {\lambda _{01} \alpha  + \underbrace {\lambda _{00}  + \beta _0 }_{\lambda '_{00} }} \right)\mathbf{g}_0 } \right) \hfill \\
\end{gathered}$}
\end{equation}

Obviously to choose one column of the column set \\ $\left\{ {{\mathbf{c}}_{{{p}^{2}}\times 1}},{{\mathbf{c}}_{{{p}^{2}}\times 1}}\oplus {{\mathbf{1}}_{{{p}^{2}}\times 1}},...,{{\mathbf{c}}_{{{p}^{2}}\times 1}}\oplus {{(\mathbf{p-1})}_{{{p}^{2}}\times 1}} \right\}$, it is enough to keep ${{\lambda }_{00}}$ constant while applying the column replacement over ${{\mathbf{P}}_{p\times {{p}^{4}}}}$. For convenience, we consider ${{\lambda }_{00}}=0$. Then, the relationship between ${{\mathbf{{C}'}}_{{{p}^{2}}\times {{p}^{3}}}}$ and ${{\mathbf{P}}_{p\times {{p}^{4}}}}$ is given by

\begin{equation}\label{Eqn:---}
{{\mathbf{{C}'}}_{{{p}^{2}}\times {{p}^{3}}}}=\text{B}\left( {{{\mathbf{{P}'}}}_{p\times {{p}^{3}}}} \right)
\end{equation}

where ${{\mathbf{{P}'}}_{p\times {{p}^{3}}}}$ contains those columns of ${{\mathbf{P}}_{p\times {{p}^{4}}}}$ which correspond to$\left( {{\lambda }_{11}}\alpha +{{\lambda }_{10}} \right){{\mathbf{g}}_{1}}+\left( {{\lambda }_{01}}\alpha  \right){{\mathbf{g}}_{0}}$. So far, we could show how it is feasible to generate ${{\mathbf{{C}'}_{{{p}^{2}}\times {{p}^{3}}}}}$ based on ${{\mathbf{P}}_{p\times {{p}^{4}}}}$. Let us proceed to construct the sensing matrix ${{\mathbf{{C}''}}_{{{p}^{2}}\times {{p}^{3}}}}$ based on (\ref{Eqn:Th1_Transform}) as
\begin{equation}\label{Eqn:---}
{ {{\left[ f\left( {{{{c}'}}_{rt}} \right) \right]}_{r,t}}=\left[ \begin{matrix}
   {{{\mathbf{{c}''}}}_{{{p}^{2}}\times 1}^{1}} & \begin{matrix}
   \cdots  & {{{\mathbf{{c}''}}}^{k}_{{{p}^{2}}\times 1}} & \cdots   \\
\end{matrix} & {{{\mathbf{{c}''}}}^{{{p}^{3}}}_{{{p}^{2}}\times 1}}  \\
\end{matrix} \right]}
\end{equation}

where $f\left( {{{{c}'}}_{rt}} \right)=\frac{1}{\sqrt{{{p}^{2}}}}{{e}^{j\frac{2\pi }{p}{{{{c}'}}_{rt}}}}$ is defined for simplicity. Since the elements of ${{\mathbf{{C}'}}_{{{p}^{2}}\times {{p}^{3}}}}$ come from the primary matrix ${{\mathbf{A}}_{p\times {{p}^{2}}}}$, i.e. ${{{c}'}_{rt}}\in {{\mathbf{A}}_{p\times {{p}^{2}}}}$, we can firstly apply the transformation function $f$ on each element of ${{\mathbf{A}}_{p\times {{p}^{2}}}}$ which leads to

\begin{equation}\label{Eqn:---}
\resizebox{0.9\hsize}{!}{$\mathbf{\bar A}_{p \times p^2 }  = \left[ {\mathbf{\bar a}_{p \times 1}^{0\alpha  + 0} , \cdots ,\mathbf{\bar a}_{p \times 1}^{\gamma _1 \alpha  + \gamma _0 } , \cdots ,\mathbf{\bar a}_{p \times 1}^{\left( {p - 1} \right)\alpha  + \left( {p - 1} \right)} } \right]$}
\end{equation}

where $\mathbf{\bar{a}}_{p\times 1}^{{{\gamma }_{1}}\alpha +{{\gamma }_{0}}}$ is any arbitrary column of ${{\mathbf{\bar{A}}}_{p\times {{p}^{2}}}}$ which corresponds to the column $\mathbf{a}_{p \times 1}^{\gamma _1 \alpha  + \gamma _0 }$ of matrix $\mathbf{A}_{p \times p^2 }$, ${{\gamma }_{0}},{{\gamma }_{1}}\in {{\mathbb{F}}_{p}}$ and  superscript ${{\gamma }_{1}}\alpha +{{\gamma }_{0}}$ is considered to accentuate the correspondence of columns of ${{\mathbf{\bar{A}}}_{p\times {{p}^{2}}}}$ and the elements of ${{\mathbf{{P}'}}_{p\times {{p}^{3}}}}$.
To simplify the notations, we have considered the defined transformation function $f$ on matrices and vectors as
\begin{equation}\label{Eqn:---}
\begin{gathered}
  \mathbf{\bar A}_{p \times p^2 }  = f\left( {\mathbf{A}_{p \times p^2 } } \right) \hfill \\
  \mathbf{\bar a}_{p \times 1}^{\gamma _1 \alpha  + \gamma _0 }  = f\left( {\mathbf{a}_{p \times 1}^{\gamma _1 \alpha  + \gamma _0 } } \right) \hfill \\
\end{gathered}
\end{equation}
Then, we can rewrite ${{\mathbf{{C}''}}_{{{p}^{2}}\times {{p}^{3}}}}$ as
\begin{equation}\label{Eqn:---}
\mathbf{C''}_{p^2  \times p^3 }  = {\rm B'}\left( {\mathbf{P'}_{p \times p^3 } } \right)
\end{equation}

where ${\rm B'}$ represents the column replacement for ${{\mathbf{\bar{A}}}_{p\times {{p}^{2}}}}$. Now, we can calculate the coherence of ${{\mathbf{{C}''}_{{{p}^{2}}\times {{p}^{3}}}}}$ whose coherence is
\begin{equation}\label{Eqn:Coh_EX1}{{\mu }_{{\mathbf{{C}''}}}}=\underset{j\ne k}{\mathop{\max }}\,\left| \left\langle {{{\mathbf{{c}''}}}^{j}_{{{p}^{2}}\times 1}},{{{\mathbf{{c}''}}}^{k}_{{{p}^{2}}\times 1}} \right\rangle  \right|\end{equation}

where$\left\langle \centerdot ,\centerdot  \right\rangle $ illustrates the inner product of two vectors and ${{\mathbf{{c}''}}^{j}_{{{p}^{2}}\times 1}}, {{\mathbf{{c}''}}^{k}_{{{p}^{2}}\times 1}}$ are two different columns of ${{\mathbf{{C}''}}_{{{p}^{2}}\times {{p}^{3}}}}$. Considering  ${\bf c''}^j _{p^2  \times 1} ,\;{\bf c''}^k _{p^2  \times 1} $  as
\begin{equation}\label{Eqn:---}
\resizebox{0.98\hsize}{!}{$
\begin{array}{l}
 {\bf c''}^j _{p^2  \times 1}  = {\rm B'}\left( {{\bf p'}^j _{p \times 1} } \right) = {\rm B'}\left( {\left( {\lambda _{11} \alpha  + \lambda _{10} } \right){\bf g}_1  + \left( {\lambda _{01} \alpha } \right){\bf g}_0 } \right) \\
 {\bf c''}^k _{p^2  \times 1}  = {\rm B'}\left( {{\bf p'}^k _{p \times 1} } \right) = {\rm B'}\left( {\left( {\lambda '_{11} \alpha  + \lambda '_{10} } \right){\bf g}_1  + \left( {\lambda '_{01} \alpha } \right){\bf g}_0 } \right) \\
 \end{array}$}
\end{equation}
where ${{\mathbf{{p}'}}^{j}_{p\times 1}}$ and ${{\mathbf{{p}'}}^{k}_{p\times 1}}$ are the ${{j}^{th}}$ and ${{k}^{th}}$ columns of ${{\mathbf{{P}'}}_{p\times {{p}^{3}}}}$,  respectively.
By replacing the corresponding columns of ${{\mathbf{\bar{A}}}_{p\times {{p}^{2}}}}$, (\ref{Eqn:Coh_EX1}) reduces to
%\noindent\rule{16.5cm}{0.4pt}
%\newcommand{\OneColEqu}[1]{%
%\end{multicols}%
%\begin{strip}
\begin{equation}\label{Eqn:Coh_EX1_1}
{
\begin{gathered}
  \hspace{-6.5cm}\mu _{\mathbf{C''}}  = \\ \mathop {\max }\limits_{\begin{array}{*{20}{c}}
  {\left\{ \begin{subarray}{l} 
  {\lambda _{11}},{\lambda _{10}}\;, \\ 
  {\lambda _{01}},{{\lambda '}_{11}}, \\ 
  {{\lambda '}_{10}}\;,{{\lambda '}_{01}} 
\end{subarray}  \right.}&  \hspace{-0.35cm}{ \in {\mathbb{F}_p}} 
\end{array}} \left| {\left\langle {\left[ {\begin{array}{*{20}c}
   {\mathbf{\bar a}_{p \times 1}^{\left( {\lambda _{11} \alpha  + \lambda _{10} } \right)0 + \lambda _{01} \alpha } }  \\
   {\begin{array}{*{20}c}
    \vdots   \\
   {\mathbf{\bar a}_{p \times 1}^{\left( {\lambda _{11} \alpha  + \lambda _{10} } \right)\phi  + \left( {\lambda _{01} \alpha } \right)} }  \\

 \end{array} }  \\
    \vdots   \\
   {\mathbf{\bar a}_{p \times 1}^{\left( {\lambda _{11} \alpha  + \lambda _{10} } \right)\left( {p - 1} \right) + \left( {\lambda _{01} \alpha } \right)} }  \\

 \end{array} } \right],} \right.} \right. \hfill \\
  \;\;\;\;\;\;\;\;\;\;\;\;\;\;\;\;\;\;\;\;\;\;\;\;\;\;\;\;\;\;\;\;\;\;\;\;\;\;\;\;\;\;\;\;\;\;\;\left.   \hspace{-1.5cm}{\left. {\left[ {\begin{array}{*{20}c}
   {\mathbf{\bar a}_{p \times 1}^{\left( {\lambda '_{11} \alpha  + \lambda '_{10} } \right)0 + \lambda '_{01} \alpha } }  \\
   {\begin{array}{*{20}c}
    \vdots   \\
   {\mathbf{\bar a}_{p \times 1}^{\left( {\lambda '_{11} \alpha  + \lambda '_{10} } \right)\phi  + \left( {\lambda '_{01} \alpha } \right)} }  \\

 \end{array} }  \\
    \vdots   \\
   {\mathbf{\bar a}_{p \times 1}^{\left( {\lambda '_{11} \alpha  + \lambda '_{10} } \right)\left( {p - 1} \right) + \left( {\lambda '_{01} \alpha } \right)} }  \\

 \end{array} } \right]} \right\rangle } \right| \hfill \\
\end{gathered} }
\end{equation}
%\begin{multicols}{2}%
%\end{strip}

where $\phi \in {{\mathbb{F}}_{p}}$.  Using  $f$, (\ref{Eqn:Coh_EX1_1}) becomes
\begin{equation}\label{Eqn:Eqn:Coh_EX1_1_1}
{
\begin{gathered}
  \hspace{-6.5cm}\mu _{\mathbf{C''}}  = \\ \mathop {\max }\limits_{\begin{array}{*{20}{c}}
  {\left\{ \begin{subarray}{l} 
  {\lambda _{11}},{\lambda _{10}}\;, \\ 
  {\lambda _{01}},{{\lambda '_{11}}}, \\ 
  {{\lambda '_{10}}}\;,{{\lambda '_{01}}} 
\end{subarray}  \right.}&  \hspace{-0.35cm}{ \in {\mathbb{F}_p}} \end{array}} \left| {\left\langle {\left[ {\begin{array}{*{20}c}
   {f\left( {\mathbf{a}_{p \times 1}^{\left( {\lambda _{11} \alpha  + \lambda _{10} } \right)0 + \lambda _{01} \alpha } } \right)}  \\
   {\begin{array}{*{20}c}
    \vdots   \\
   {f\left( {\mathbf{a}_{p \times 1}^{\left( {\lambda _{11} \alpha  + \lambda _{10} } \right)\phi  + \left( {\lambda _{01} \alpha } \right)} } \right)}  \\

 \end{array} }  \\
    \vdots   \\
   {f\left( {\mathbf{a}_{p \times 1}^{\left( {\lambda _{11} \alpha  + \lambda _{10} } \right)\left( {p - 1} \right) + \left( {\lambda _{01} \alpha } \right)} } \right)}  \\

 \end{array} } \right],} \right.} \right. \hfill \\
  \;\;\;\;\;\;\;\;\;\;\;\;\;\;\;\;\;\;\;\;\;\;\;\;\;\;\;\;\;\;\;\;\;\;\;\;\;\;\;\;\;\;\;\;\;\;\;\;\left. \hspace{-1.7cm}{\left. {\left[ {\begin{array}{*{20}c}
   {f\left( {\mathbf{a}_{p \times 1}^{\left( {\lambda '_{11} \alpha  + \lambda '_{10} } \right)0 + \lambda '_{01} \alpha } } \right)}  \\
   {\begin{array}{*{20}c}
    \vdots   \\
   {f\left( {\mathbf{a}_{p \times 1}^{\left( {\lambda '_{11} \alpha  + \lambda '_{10} } \right)\phi  + \left( {\lambda '_{01} \alpha } \right)} } \right)}  \\

 \end{array} }  \\
    \vdots   \\
   {f\left( {\mathbf{a}_{p \times 1}^{\left( {\lambda '_{11} \alpha  + \lambda '_{10} } \right)\left( {p - 1} \right) + \left( {\lambda '_{01} \alpha } \right)} } \right)}  \\

 \end{array} } \right]} \right\rangle } \right| \hfill \\
\end{gathered} }
\end{equation}

Contemplate the following characteristic of $f$ over two either scalars or vectors ${{\mathbf{u}}_{1}}$ and  ${{\mathbf{u}}_{2}}$ as
\begin{equation}\label{Eqn:---}
f\left( {\mathbf{u}_1 } \right).f^* \left( {\mathbf{u}_2 } \right) = f\left( {\mathbf{u}_1  - \mathbf{u}_2 } \right)
\end{equation}
where $*$ indicates complex conjugation. Then, the inner part of  (\ref{Eqn:Eqn:Coh_EX1_1_1}) can be simplified as
\begin{equation}\label{Eqn:Coh_EX1_2}
\resizebox {1. \hsize} {!} { $
\hspace{-0.2cm}\left\langle \hspace{-0.09cm}{\left[ \hspace{-0.2cm}{\begin{array}{*{20}c}
   {f\left( {\mathbf{a}_{p \times 1}^{\left( {\lambda _{11} \alpha  + \lambda _{10} } \right)0 + \lambda _{01} \alpha }  - \mathbf{a}_{p \times 1}^{\left( {\lambda '_{11} \alpha  + \lambda '_{10} } \right)0 + \lambda '_{01} \alpha } } \right)}  \\
   {\begin{array}{*{20}c}
    \vdots   \\
   {f\left( {\mathbf{a}_{p \times 1}^{\left( {\lambda _{11} \alpha  + \lambda _{10} } \right)\phi  + \left( {\lambda _{01} \alpha } \right)}  - \mathbf{a}_{p \times 1}^{\left( {\lambda '_{11} \alpha  + \lambda '_{10} } \right)\phi  + \left( {\lambda '_{01} \alpha } \right)} } \right)}  \\

 \end{array} }  \\
    \vdots   \\
   {f\left( {\mathbf{a}_{p \times 1}^{\left( {\lambda _{11} \alpha  + \lambda _{10} } \right)\left( {p - 1} \right) + \left( {\lambda _{01} \alpha } \right)}  - \mathbf{a}_{p \times 1}^{\left( {\lambda '_{11} \alpha  + \lambda '_{10} } \right)\left( {p - 1} \right) + \left( {\lambda '_{01} \alpha } \right)} } \right)}  \\

 \end{array} } \hspace{-0.2cm} \right]\hspace{-0.13cm},\mathbf{1}_{p^2  \times 1} }\hspace{-0.12cm} \right\rangle $}
\end{equation}

Moreover, according to (\ref{Eqn:Primary_Mat}), the inner term of transformation function $f$ in (\ref{Eqn:Coh_EX1_2}) for an arbitrary $\phi \in {{\mathbb{F}}_{p}}$ can be rewritten as
\begin{equation}\label{Eqn:---}
\begin{gathered}
  \mathbf{a}_{p \times 1}^{\left( {\lambda _{11} \alpha  + \lambda _{10} } \right)\phi  + \left( {\lambda _{01} \alpha } \right)}  - \mathbf{a}_{p \times 1}^{\left( {\lambda '_{11} \alpha  + \lambda '_{10} } \right)\phi  + \left( {\lambda '_{01} \alpha } \right)}  =   \\
  \mathbf{a}_{p \times 1}^{\left( {\lambda _{11} \alpha  + \lambda _{10} } \right)\phi  + \left( {\lambda _{01} \alpha } \right) - \left( {\left( {\lambda '_{11} \alpha  + \lambda '_{10} } \right)\phi  + \left( {\lambda '_{01} \alpha } \right)} \right)} \;\; =  \hfill \\
  \mathbf{a}_{p \times 1}^{\left( {\lambda ''_{11} \alpha  + \lambda ''_{10} } \right)\phi  + \left( {\lambda ''_{01} \alpha } \right)}  \hfill \\
\end{gathered}
\end{equation}

where ${{{\lambda }''_{11}}}={{{\lambda }'}_{11}}-{{\lambda }_{11}},{{{\lambda }''_{10}}}={{{\lambda }'_{10}}}-{{\lambda }_{10}},{{{\lambda }''_{01}}}={{{\lambda }'_{01}}}-{{\lambda }_{01}}$and ${{{\lambda }''_{11}}},{{{\lambda }''_{10}}},{{{\lambda }''_{01}}}\in {{\mathbb{F}}_{p}}$. Therefore, (\ref{Eqn:Coh_EX1_2}) can be stated as

\begin{equation}\label{Eqn:---}
\resizebox{1\hsize}{!}{$
{
\begin{gathered}
  {\mu _{{\mathbf{C''}}}} =  \hfill \\
 \hspace{-0.4cm} \mathop {\max }\limits_{\begin{array}{*{20}{c}}
  {\left\{ \begin{subarray}{l} 
  {{\lambda ''_{01}}}, \\ 
  {{\lambda ''_{10}}}, \\ 
  {{\lambda ''_{11}}} 
\end{subarray}  \right.}&\hspace{-0.3cm}{ \in {\mathbb{F}_p}} 
\end{array}} \hspace{-0.2cm}\left| {\left\langle {\left[\hspace{-0.15cm} {\begin{array}{*{20}{c}}
  {f\left( {{{\lambda ''_{01}}}{{\mathbf{g}}_1}} \right)} \\ 
  {\begin{array}{*{20}{c}}
   \vdots  \\ 
  {f\left( {\left( {{{\lambda ''_{11}}}\phi  + {{\lambda ''_{01}}}} \right){{\mathbf{g}}_1} + {{\lambda ''_{10}}}{{\mathbf{g}}_0}} \right)} 
\end{array}} \\ 
   \vdots  \\ 
  {f\left( {\left( {{{\lambda ''_{11}}}\left( {p - 1} \right) + {{\lambda ''_{01}}}} \right){{\mathbf{g}}_1} + {{\lambda ''_{10}}}{{\mathbf{g}}_0}} \right)} 
\end{array}} \hspace{-0.17cm}\right]\hspace{-0.1cm},{{\mathbf{1}}_{{p^2} \times 1}}} \right\rangle } \right| \hfill \\ 
\end{gathered}
}$}
\end{equation}
Owing to the fact that coherence is considered over two different columns, ${{{\lambda }''_{01}}},{{{\lambda }''_{10}}},{{{\lambda }''_{11}}}$ cannot be zero at the same time. Considering ${{\mathbf{g}}_{1}}={{\left[ 0,\cdots ,p-1 \right]}^{T}}$ and ${{\mathbf{g}}_{0}}=\left[ 1,\cdots ,1 \right]_{1\times p}^{T}$, ${{\mu }_{{\mathbf{{C}''}}}}$ is written as
\end{proof}

%\begin{strip}
%\end{multicols}
%\begin{multicols}{1}
%\OneColEqu {

\begin{align} \label{Eqn:Shrinked_equation_Coh}
\begin{split}
\resizebox{1\hsize}{!}{$
\begin{gathered}
  {\mu _{{\mathbf{C''}}}} =  \hfill \\
  \hspace{-0.2cm}\mathop {\max }\limits_{\left\{ {\begin{array}{*{20}{c}}
\hspace{-0.2cm}  \begin{subarray}{l} 
  {{\lambda ''_{01}}}, \\ 
  {{\lambda ''_{10}}}, \\ 
  {{\lambda ''_{11}}} 
\end{subarray} &\hspace{-0.3cm}{ \in {\mathbb{F}_p}} 
\end{array}} \right.} \hspace{-0.25cm}\frac{1}{{{p^2}}}\left| {\underbrace {f\left( {0{{\lambda ''_{01}}}} \right) +  \cdots  + f\left( {\gamma {{\lambda ''_{01}}}} \right) +  \cdots  + f\left( {\left( {p - 1} \right){{\lambda ''_{01}}}} \right)}_{\phi  = 0} +  \cdots } \right. +  \hfill \\
  \,\,\,\,\,\,\,\,\,\,\,\,\,\,\,\,\,\,\,\underbrace {\begin{array}{*{20}{c}}
  {f\left( {{{\lambda ''_{10}}}\phi } \right) +  \cdots  + f\left( {\left( {{{\lambda ''_{11}}}\phi  + {{\lambda ''_{01}}}} \right)\gamma  + {{\lambda ''_{10}}}\phi } \right) +  \cdots  + } \\ 
  {f\left( {\left( {{{\lambda ''_{11}}}\phi  + {{\lambda ''_{01}}}} \right)\left( {p - 1} \right) + {{\lambda ''_{10}}}\phi } \right)} 
\end{array}}_\phi \;\; +  \cdots  +  \hfill \\
  \,\,\,\,\,\,\,\,\,\,\left. {\underbrace {\begin{array}{*{20}{c}}
  {\;f\left( {{{\lambda ''_{10}}}\left( {p - 1} \right)} \right) +  \cdots  + f\left( {\left( {{{\lambda ''_{11}}}\left( {p - 1} \right) + {{\lambda ''_{01}}}} \right)\gamma  + {{\lambda ''_{10}}}\left( {p - 1} \right)} \right)} \\ 
  { +  \cdots  + f\left( {\left( {{{\lambda ''_{11}}}\left( {p - 1} \right) + {{\lambda ''_{01}}}} \right)\left( {p - 1} \right) + {{\lambda ''_{10}}}\left( {p - 1} \right)} \right)} 
\end{array}}_{\phi  = p - 1}} \right| \hfill \\
  \,\,\,\,\,\,\,\, = \mathop {\max }\limits_{{{\lambda ''_{01}}},{{\lambda ''_{10}}},{{\lambda ''_{11}}} \in {\mathbb{F}_p}} \frac{1}{{{p^2}}}\left| {\sum\limits_{\phi  = 0}^{p - 1} {\sum\limits_{\gamma  = 0}^{p - 1} {{e^{j\frac{{2\pi }}{p}\left( {\gamma \left( {{{\lambda ''_{11}}}\phi  + {{\lambda ''_{01}}}} \right) + {{\lambda ''_{10}}}\phi } \right)}}} } } \right| \hfill \\ 
\end{gathered} $}
\end{split}
\end{align}

%\end{multicols}
%\end{strip}
%\begin{multicols}{2}
By introducing the additive character over a given finite field, defined in the following, ${{\mu }_{{\mathbf{{C}''}}}}$ is equal to ${1}/{p}\;$ as illustrated in Appendix A.

\begin{defi} \label{}
The canonical additive character of finite field ${{\mathbb{F}}_{p}}$ is (Ch. 5 \cite{Finite_22}):
\begin{equation}\label{Eqn:---}\chi \left( \gamma  \right)={{e}^{j\frac{2\pi }{p}\gamma }},\ \ \ \ \ \ \ \forall \gamma \in {{\mathbb{F}}_{p}}\end{equation}
where $p$ is a prime integer.  Then, the following equations hold for $\chi \left( \gamma  \right)$:

\begin{equation}\label{Eqn:---}\chi \left( {{\gamma }_{1}}+{{\gamma }_{2}} \right)=\chi \left( {{\gamma }_{1}} \right)\chi \left( {{\gamma }_{2}} \right)\ \ \ \ \ \ \ \forall {{\gamma }_{1}},{{\gamma }_{2}}\in {{\mathbb{F}}_{p}}\end{equation}
\begin{equation}\label{Eqn:Additive_Char}\sum\limits_{\gamma \in {{\mathbb{F}}_{p}}}{\chi \left( \tau *\gamma +\beta  \right)}=\left\{ \begin{matrix}
   0,\ \ \ \ \ \tau \in {{\mathbb{F}}_{p}}-\left\{ 0 \right\}\   \\
   p,\ \ \ \ \tau =0\ \ \ \ \ \ \ \ \ \   \\
\end{matrix} \right.\ \ \ \ \ \ \ \beta \in {{\mathbb{F}}_{p}}\end{equation}

\end {defi}

Within this example we could offer low coherence deterministic sensing matrices of size ${{p}^{2}}\times {{p}^{3}}$  and coherence ${1}/{p}\;$. Note that this resembles to the sensing matrices addressed in \cite{Mohades_15} showing the generality of Theorem \ref{Our_CR_Method}.

\begin {example} \label{42_343}
Considering the same procedure as Example \ref{49_343}, we only highlight the changes.
Let the generator matrix of codewords of ${{\mathbf{A}}_{N\times {{p}^{k}}}}$ and ${{\mathbf{P}}_{{N}'\times {{p}^{k{k}'}}}}$ be ${{\mathbf{G}}_{2\times p}}={{\left[ {{\mathbf{g}}_{0}},{{\mathbf{g}}_{1}} \right]}^{T}}$and ${{\mathbf{{G}'}}_{2\times p}}={{\left[ {{{\mathbf{{g}'}}}_{0}},{{{\mathbf{{g}'}}}_{1}} \right]}^{T}}$, respectively, where ${{\mathbf{g}}_{0}}=\left[ 1,\cdots ,1 \right]_{1\times p}^{T}$, ${{\mathbf{g}}_{1}}=\left[ 0,\cdots ,p-1 \right]_{1\times p}^{T}$, ${{\mathbf{{g}'}}_{0}}=\left[ 1,\cdots ,1 \right]_{1\times \left( p-1 \right)}^{T}$ and ${{\mathbf{{g}'}}_{1}}=\left[ 0,\cdots ,p-2 \right]_{1\times \left( p-1 \right)}^{T}$. Inspiring the previous example, we have ${{\mathbf{A}}_{p\times {{p}^{2}}}}$, ${{\mathbf{P}}_{\left( p-1 \right)\times {{p}^{4}}}}$, and ${{\mathbf{C}}_{p\left( p-1 \right)\times {{p}^{4}}}}$ for which ${{d}_{\min }}=p-1$, ${{{d}'}_{\min }}=p-2$, and $d_{\min }^{\mathbf{C}}=\left( p-1 \right)\left( p-2 \right)$, which is a large minimum distance. By applying Theorem \ref{CS_Generation} to ${{\mathbf{C}}_{p\left( p-1 \right)\times {{p}^{4}}}}$, we get
\begin{equation}\label{Eqn:---}{{\mathbf{{C}''}}_{p\left( p-1 \right)\times {{p}^{3}}}}\text{=}\frac{1}{\sqrt{p\left( p-1 \right)}}{{\left[ {{e}^{j\frac{2\pi }{p}{{{{c}'}}_{rt}}}} \right]}_{r,t}}\end{equation}

where ${{{c}'}_{rt}}$ is an element of ${{\mathbf{{C}'}}_{p\left( p-1 \right)\times {{p}^{3}}}}$ constructed from ${{\mathbf{C}}_{p\left( p-1 \right)\times {{p}^{4}}}}$ by selecting only one column from each column set $\left\{ {\mathbf{c}_{p\left( {p - 1} \right) \times 1} ,\mathbf{c}_{p\left( {p - 1} \right) \times 1}  \oplus \mathbf{1},...,\mathbf{c}_{p\left( {p - 1} \right) \times 1}  \oplus (\mathbf{p - 1})} \right\}$.  It is proved that the coherence of ${{\mathbf{{C}''}}_{p\left( p-1 \right)\times {{p}^{3}}}}$ is ${1}/{\left( p-1 \right)}\;$ (see Appendix B).
In this example, we proposed another set of low coherence deterministic sensing matrices of size $p\left( p-1 \right)\times {{p}^{3}}$  and coherence ${1}/{\left( p-1 \right)}\;$. Note that although similar matrices of size $\left( {{p}^{2}}-1 \right)\times {{p}^{3}}$ with the coherence of ${1}/{\left( p-1 \right)}\;$ are developed in \cite{Amini_19}, our method achieves the same coherence, but with a smaller number of rows in matrix construction which leads to a lower number of measurements.

\end{example}

\subsection {Matrix Resizing} \label{Matrix_Resizing}
Here, we employ the column replacement concept to resize any deterministic sensing matrix and we illustrate that after resizing to a matrix with a given greater size, our method leads to a matrix with lower coherence than that of \cite{Amini_19}. To do so, first we indicate Theorem \ref{Kronecker_Resizing_Method} from \cite{Amini_19} and then develop our method in Theorem \ref{CR_Resizing_Method}.

\begin{theorem} \label{Kronecker_Resizing_Method}
Contemplate two sensing matrices $\mathbf{A}$ and $\mathbf{B}$ whose coherence are ${{\mu }_{\mathbf{A}}}$and ${{\mu }_{\mathbf{B}}}$, respectively. Assuming $\mathbf{C=A}\otimes \mathbf{B}$, then the coherence of  $\mathbf{C}$ is ${{\mu }_{\mathbf{A}}}=\max \left\{ {{\mu }_{\mathbf{A}}},{{\mu }_{\mathbf{B}}} \right\}$ where $\otimes $ denotes the Kronecker product defined as \cite{Amini_19}
\begin{equation}\label{Eqn:---}
\mathbf{X = Z} \otimes \mathbf{Y} = \left[ {\begin{array}{*{20}c}
   {z_{11} \mathbf{Y}} &  \cdots  & {z_{1n} \mathbf{Y}}  \\
    \vdots  &  \ddots  &  \vdots   \\
   {z_{m1} \mathbf{Y}} &  \cdots  & {z_{mn} \mathbf{Y}}  \\
 \end{array} } \right]
\end{equation}

where  $\mathbf{Z}=\left[ \begin{matrix}
   {{z}_{11}} & \cdots  & {{z}_{1n}}  \\
   \vdots  & \ddots  & \vdots   \\
   {{z}_{m1}} & \cdots  & {{z}_{mn}}  \\
\end{matrix} \right]$ and $\mathbf{Y}$ are two arbitrary matrices.
\end{theorem}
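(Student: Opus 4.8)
The plan is to reduce the coherence of the Kronecker product to the coherences of its two factors by exploiting the way inner products factor over tensor products. First I would fix notation: write $\mathbf{A}=\left[\mathbf{a}_1,\dots,\mathbf{a}_n\right]$ and $\mathbf{B}=\left[\mathbf{b}_1,\dots,\mathbf{b}_q\right]$ for the column decompositions, and observe from the definition of $\otimes$ given in the statement that the columns of $\mathbf{C}=\mathbf{A}\otimes\mathbf{B}$ are precisely the vectors $\mathbf{a}_i\otimes\mathbf{b}_j$, indexed by pairs $(i,j)$. Thus the coherence of $\mathbf{C}$, computed via (\ref{Eqn:Coherence}), is a maximum over pairs of such columns.

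The key algebraic fact I would establish is the factorization identity $\left\langle\mathbf{a}_i\otimes\mathbf{b}_j,\ \mathbf{a}_{i'}\otimes\mathbf{b}_{j'}\right\rangle=\left\langle\mathbf{a}_i,\mathbf{a}_{i'}\right\rangle\left\langle\mathbf{b}_j,\mathbf{b}_{j'}\right\rangle$, together with the induced factorization of norms $\left\|\mathbf{a}_i\otimes\mathbf{b}_j\right\|_2=\left\|\mathbf{a}_i\right\|_2\left\|\mathbf{b}_j\right\|_2$; both follow by writing the tensor product out coordinate-wise and regrouping the resulting double sum. Introducing the normalized correlations $\rho^{\mathbf{A}}_{i,i'}=\left|\left\langle\mathbf{a}_i,\mathbf{a}_{i'}\right\rangle\right|\big/\left(\left\|\mathbf{a}_i\right\|_2\left\|\mathbf{a}_{i'}\right\|_2\right)$ and likewise $\rho^{\mathbf{B}}_{j,j'}$, the norm factors cancel against the inner-product factors and the coherence collapses to $\mu_{\mathbf{C}}=\max_{(i,j)\ne(i',j')}\rho^{\mathbf{A}}_{i,i'}\,\rho^{\mathbf{B}}_{j,j'}$.

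The heart of the argument is then a case analysis on the index pair $(i,j)\ne(i',j')$. When $i=i'$ but $j\ne j'$ we have $\rho^{\mathbf{A}}_{i,i'}=1$ and $\rho^{\mathbf{B}}_{j,j'}\le\mu_{\mathbf{B}}$, so the product attains the value $\mu_{\mathbf{B}}$ at the maximizing pair; symmetrically the case $i\ne i'$, $j=j'$ yields $\mu_{\mathbf{A}}$; and when both indices differ the product is at most $\mu_{\mathbf{A}}\mu_{\mathbf{B}}$. Taking the maximum across the three cases gives the upper bound $\max\left\{\mu_{\mathbf{A}},\mu_{\mathbf{B}}\right\}$, while the first two cases show this bound is attained (using, as always holds for sensing matrices, that each factor has at least two columns).

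The only genuinely delicate point is disposing of the third case: I must argue $\mu_{\mathbf{A}}\mu_{\mathbf{B}}\le\max\left\{\mu_{\mathbf{A}},\mu_{\mathbf{B}}\right\}$, which rests on the elementary but essential observation that any coherence is a maximum of cosine-type quantities and hence lies in $\left[0,1\right]$, so the product of two such numbers cannot exceed either factor. This prevents case three from ever dominating. The remaining steps are routine bookkeeping, so the main obstacle is really just setting up the factorization identity cleanly and ensuring the diagonal correlations equal exactly one, so that the off-diagonal structure of each factor surfaces intact in the final maximum.
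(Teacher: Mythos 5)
Your proof is correct, but note that the paper itself offers no proof to compare against: Theorem \ref{Kronecker_Resizing_Method} is quoted verbatim from \cite{Amini_19} purely as background for the authors' own resizing result (Theorem \ref{CR_Resizing_Method}), so your argument supplies a derivation the paper delegates to a citation. The three ingredients you use are exactly what is needed and are all sound: the factorizations $\left\langle \mathbf{a}_i\otimes\mathbf{b}_j,\mathbf{a}_{i'}\otimes\mathbf{b}_{j'}\right\rangle=\left\langle \mathbf{a}_i,\mathbf{a}_{i'}\right\rangle\left\langle \mathbf{b}_j,\mathbf{b}_{j'}\right\rangle$ and $\left\|\mathbf{a}_i\otimes\mathbf{b}_j\right\|_2=\left\|\mathbf{a}_i\right\|_2\left\|\mathbf{b}_j\right\|_2$, the three-way case split on $(i,j)\ne(i',j')$, and the observation that $\mu_{\mathbf{A}}\mu_{\mathbf{B}}\le\max\left\{\mu_{\mathbf{A}},\mu_{\mathbf{B}}\right\}$ because coherences lie in $[0,1]$ by Cauchy--Schwarz; your attention to attainment (each factor having at least two columns, so the cases $i=i'$, $j\ne j'$ and $i\ne i'$, $j=j'$ are nonempty) is what upgrades the upper bound to the claimed equality rather than a mere inequality. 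One cosmetic remark: the theorem as printed contains a typo, writing ${\mu}_{\mathbf{A}}=\max\left\{{\mu}_{\mathbf{A}},{\mu}_{\mathbf{B}}\right\}$ where the left-hand side should be ${\mu}_{\mathbf{C}}$; your proof implicitly and correctly targets ${\mu}_{\mathbf{C}}$.
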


Now we propose our method to resize deterministic sensing matrices based on column replacement and then by providing one example, we show outperformance of our method in comparison to that of \cite{Amini_19}.

\begin{theorem} \label{CR_Resizing_Method}
Let \scalebox{0.8}[1]{$\mathbf{A}_{m \times n}  = \left[ {\begin{array}{*{20}c}
   {\mathbf{a}_{m \times 1}^{t_1 } } &  \cdots  & {\mathbf{a}_{m \times 1}^{t_j } } &  \cdots  & {\mathbf{a}_{m \times 1}^{t_n } }  \\  \end{array} } \right]$} be a deterministic sensing matrix with coherence of ${{\mu }_{\mathbf{A}}}$, where superscript ${{t}_{j}}$ is to create a correspondence between the primary and pattern matrices in column replacement and $\mathbf{a}_{m\times 1}^{{{t}_{j}}}$ are normalized vectors for $1\le j\le n$. Furthermore, let \scalebox{0.8}[1]{$\mathbf{P}_{N \times L}  = \left[ {\begin{array}{*{20}c}
   {\mathbf{p}_{N \times 1}^1 } &  \cdots  & {\mathbf{p}_{N \times 1}^i } &  \cdots  & {\mathbf{p}_{N \times 1}^L }  \\
 \end{array} } \right]$} be a matrix whose elements are ${{t}_{j}},\ \ 1\le j\le n$ and no two distinct columns of this matrix have more than ${{d}_{\mathbf{P}}}$ elements in common. Trivially, $1\le i\le L$ and $0\le {{d}_{\mathbf{P}}}\le N-1$ are true statements. Then, the coherence of ${{\mathbf{C}}_{mN\times L}}\text{=}\frac{1}{\sqrt{N}}\text{B}\left( {{\mathbf{P}}_{N\times L}} \right)$ is ${{\mu }_{\mathbf{C}}}\le \frac{1}{N}\left( {{d}_{\mathbf{P}}}+\left( N-{{d}_{\mathbf{P}}} \right){{\mu }_{\mathbf{A}}} \right)$, where $\text{B}\left( {{\mathbf{P}}_{N\times L}} \right)$ means column replacement of ${{\mathbf{A}}_{m\times n}}$ based on the pattern matrix ${{\mathbf{P}}_{N\times L}}$ and the factor $\frac{1}{\sqrt{N}}$  normalizes the columns after column replacement.
\end{theorem}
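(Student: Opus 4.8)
The plan is to read off the structure of $\mathbf{C}_{mN\times L}$ from the column replacement operation and then bound each off-diagonal inner product by separating the rows of the pattern matrix into those where the two columns agree and those where they differ. First I would write an arbitrary column of $\mathbf{C}$ explicitly: by Definition \ref{Definition:Column_Replacement} together with (\ref{Eqn:CR_Deduction}), the $\gamma$-th column of $\mathbf{C}$ is obtained by stacking, for each row $\beta=1,\dots,N$ of $\mathbf{P}_{N\times L}$, the column $\mathbf{a}_{m\times 1}^{p_{\beta\gamma}}$ of $\mathbf{A}_{m\times n}$ indicated by the entry $p_{\beta\gamma}$. Since each $\mathbf{a}_{m\times 1}^{t_j}$ is normalized, the stacked (pre-scaling) column has squared norm $\sum_{\beta=1}^{N}\|\mathbf{a}_{m\times 1}^{p_{\beta\gamma}}\|^2=N$, so the factor $1/\sqrt{N}$ in $\mathbf{C}_{mN\times L}=\tfrac{1}{\sqrt N}\text{B}(\mathbf{P}_{N\times L})$ makes every column of $\mathbf{C}$ of unit norm. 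Hence by (\ref{Eqn:Coherence}) the coherence reduces to $\mu_{\mathbf{C}}=\max_{\gamma\ne\delta}|\langle\mathbf{c}^{\gamma},\mathbf{c}^{\delta}\rangle|$, with no norm denominators to track.

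Next I would expand the inner product of two distinct columns as a row-wise sum over the pattern matrix,
\[
\langle\mathbf{c}^{\gamma},\mathbf{c}^{\delta}\rangle=\frac{1}{N}\sum_{\beta=1}^{N}\langle\mathbf{a}_{m\times 1}^{p_{\beta\gamma}},\mathbf{a}_{m\times 1}^{p_{\beta\delta}}\rangle,
\]
and split the sum according to whether the two pattern entries in row $\beta$ coincide. When $p_{\beta\gamma}=p_{\beta\delta}$ the summand is the squared norm of a normalized column, namely $1$; when $p_{\beta\gamma}\ne p_{\beta\delta}$ the summand is the inner product of two \emph{distinct} columns of $\mathbf{A}$, whose magnitude is at most $\mu_{\mathbf{A}}$ by the definition of coherence. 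Letting $s$ denote the number of rows in which the two chosen columns of $\mathbf{P}$ agree, the triangle inequality gives $|\langle\mathbf{c}^{\gamma},\mathbf{c}^{\delta}\rangle|\le\frac{1}{N}\bigl(s+(N-s)\mu_{\mathbf{A}}\bigr)$.

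The final step is to maximize this bound over all admissible pairs of columns. The hypothesis that no two distinct columns of $\mathbf{P}_{N\times L}$ share more than $d_{\mathbf{P}}$ entries means $s\le d_{\mathbf{P}}$, and since $s+(N-s)\mu_{\mathbf{A}}=N\mu_{\mathbf{A}}+s(1-\mu_{\mathbf{A}})$ is nondecreasing in $s$ (because any coherence satisfies $\mu_{\mathbf{A}}\le 1$), the worst case is $s=d_{\mathbf{P}}$, yielding $|\langle\mathbf{c}^{\gamma},\mathbf{c}^{\delta}\rangle|\le\frac{1}{N}\bigl(d_{\mathbf{P}}+(N-d_{\mathbf{P}})\mu_{\mathbf{A}}\bigr)$ for every pair, and hence the claimed bound on $\mu_{\mathbf{C}}$. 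The only genuinely delicate point — easy once noticed — is this monotonicity observation: the triangle inequality alone produces a bound depending on the actual overlap $s$, so one must invoke $\mu_{\mathbf{A}}\le 1$ to argue that enlarging $s$ to its maximal value $d_{\mathbf{P}}$ cannot decrease the bound, thereby making the stated inequality hold uniformly over all column pairs rather than merely for the specific pair at hand.
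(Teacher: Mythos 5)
Your proof follows essentially the same route as the paper's: expand each column of $\mathbf{C}$ as a stack of columns of $\mathbf{A}$ selected by a column of $\mathbf{P}$, note the $1/\sqrt{N}$ factor yields unit-norm columns, split the inner product row-wise into agreeing rows (contributing $1$ each) and disagreeing rows (contributing at most $\mu_{\mathbf{A}}$ in magnitude each), and apply the triangle inequality. The only difference is that you make explicit the monotonicity step — that $\mu_{\mathbf{A}}\le 1$ ensures the bound $\tfrac{1}{N}\bigl(s+(N-s)\mu_{\mathbf{A}}\bigr)$ is worst at $s=d_{\mathbf{P}}$ — which the paper uses implicitly; this is a minor gain in rigor, not a different approach.
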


\begin {proof}\renewcommand{\qedsymbol}{}
Suppose that \scalebox{.85}[1]{$\mathbf{C}_{mN \times L}  = \left[ {\mathbf{c}_{mN \times 1}^1  \cdots \mathbf{c}_{mN \times 1}^i  \cdots \mathbf{c}_{mN \times 1}^L } \right]$}. Then, ${{\mu }_{\mathbf{C}}}$ is defined as
\begin{equation}\label{Eqn:---}{{\mu }_{\mathbf{C}}}=\underset{1\le {{i}_{1}},{{i}_{2}}\le L}{\mathop{\max }}\,\left| \left\langle \mathbf{c}_{mN\times 1}^{{{i}_{1}}},\mathbf{c}_{mN\times 1}^{{{i}_{2}}} \right\rangle  \right|\end{equation}

Assume that the two arbitrary columns $\mathbf{c}_{mN\times 1}^{{{i}_{1}}}$ and $\mathbf{c}_{mN\times 1}^{{{i}_{2}}}$ are constructed by employing column replacement over two columns $\mathbf{p}_{N\times 1}^{{{i}_{1}}}$ and $\mathbf{p}_{N\times 1}^{{{i}_{2}}}$ of matrix ${{\mathbf{P}}_{N\times L}}$ as
 \begin{equation}\label{Eqn:---}
\scalebox{.85}[1]{$
\begin{gathered}
  \mathbf{c}_{mN \times 1}^{i_1 }  = \frac{1}
{{\sqrt N }}\text{B}\left( {\mathbf{p}_{N \times 1}^{i_1 } } \right) \hfill \\
  \;\;\;\;\;\;\;\; = \;\;\frac{1}
{{\sqrt N }}\left[ {\left[ {\text{B}\left( {p_1^{i_1 } } \right)} \right]_{1 \times m}^T  \cdots \left[ {\text{B}\left( {p_s^{i_1 } } \right)} \right]_{1 \times m}^T  \cdots \left[ {\text{B}\left( {p_N^{i_1 } } \right)} \right]_{1 \times m}^T } \right]^T  \hfill \\
\end{gathered}$}
\end{equation}

\begin{equation}\label{Eqn:---}
\scalebox{.85}[1]{$
\begin{gathered}
  \mathbf{c}_{mN \times 1}^{i_2 }  = \frac{1}
{{\sqrt N }}\text{B}\left( {\mathbf{p}_{N \times 1}^{i_2 } } \right) \hfill \\
  \;\;\;\;\;\;\;\; = \;\;\frac{1}
{{\sqrt N }}\left[ {\left[ {\text{B}\left( {p_1^{i_2 } } \right)} \right]_{1 \times m}^T  \cdots \left[ {\text{B}\left( {p_s^{i_2 } } \right)} \right]_{1 \times m}^T  \cdots \left[ {\text{B}\left( {p_N^{i_2 } } \right)} \right]_{1 \times m}^T } \right]^T  \hfill \\
\end{gathered} $}
\end{equation}
\end{proof}

where $1\le s\le N$and $1\le {{i}_{1}},{{i}_{2}}\le L$. Then, we have
%\end{multicols}

\begin{align}\label{Eqn:Coh_CR}
\begin{gathered}
  \hspace{-5.4cm}\left| {\left\langle {\mathbf{c}_{mN \times 1}^{i_1 } ,\mathbf{c}_{mN \times 1}^{i_2 } } \right\rangle } \right| = \\ \frac{1}
{N}\left| {\left\langle {\left[ {\left( {\text{B}\left( {p_1^{i_1 } } \right)} \right)_{1 \times m}^T  \cdots \left( {\text{B}\left( {p_s^{i_1 } } \right)} \right)_{1 \times m}^T  \cdots \left( {\text{B}\left( {p_N^{i_1 } } \right)} \right)_{1 \times m}^T } \right]^T \hspace{-.2cm},} \right.} \right. \hfill \\ \hspace{-2.9cm}
  \;\;\;\;\;\;\;\;\;\;\;\;\;\;\;\;\;\;\;\;\;\;\;\;\;\;\;\;\;\;\;\;\;\;\;\left. {\left. {\left[ {\left( {\text{B}\left( {p_1^{i_2 } } \right)} \right)_{1 \times m}^T  \cdots \left( {\text{B}\left( {p_s^{i_2 } } \right)} \right)_{1 \times m}^T  \cdots \left( {\text{B}\left( {p_N^{i_2 } } \right)} \right)_{1 \times m}^T } \right]^T } \right\rangle } \right| \hfill \\
\end{gathered} \end{align}

According to the triangle inequality,  (\ref{Eqn:Coh_CR})  can be rewritten as an inequality as

\begin{align}\label{Eqn:ineqality_CR}
\begin{gathered}
 \hspace{-5cm} \left| {\left\langle {\mathbf{c}_{mN \times 1}^{i_1 } ,\mathbf{c}_{mN \times 1}^{i_2 } } \right\rangle } \right| \leqslant \\ \hspace{0cm}\;\;\;\frac{1}
{N}\left( {\left| {\left\langle {\left[ {\text{B}\left( {p_1^{i_1 } } \right)} \right]_{1 \times m}^T ,\left[ {\text{B}\left( {p_1^{i_2 } } \right)} \right]_{1 \times m}^T } \right\rangle } \right|} \right. +  \cdots  +  \\ \;\;\;\;\;\;\;\;\; \left| {\left\langle {\left[ {\text{B}\left( {p_s^{i_1 } } \right)} \right]_{1 \times m}^T ,\left[ {\text{B}\left( {p_s^{i_2 } } \right)} \right]_{1 \times m}^T } \right\rangle } \right| +  \cdots  +  \hfill \\
\;\;\;\;\;\;\;\;\;\;\left. {\left| {\left\langle {\left[ {\text{B}\left( {p_N^{i_1 } } \right)} \right]_{1 \times m}^T ,\left[ {\text{B}\left( {p_N^{i_2 } } \right)} \right]_{1 \times m}^T } \right\rangle } \right|} \right) \hfill \\
\end{gathered}
\end{align}

In Theorem \ref{CR_Resizing_Method}, we assumed that each two distinct columns of ${{\mathbf{P}}_{N\times L}}$ have at most ${{d}_{\mathbf{P}}}$ elements in common and considering the fact that $\text{B}\left( p_{s}^{{}} \right)$ recalls a column of ${{\mathbf{A}}_{m\times n}}$ whose coherence is ${{\mu }_{\mathbf{A}}}$, we have
\begin{equation}\label{Eqn:---}
\scalebox{.85}[1]{$
\left| {\left\langle {\left[ {\text{B}\left( {p_s^{i_1 } } \right)} \right]_{1 \times m}^T ,\left[ {\text{B}\left( {p_s^{i_2 } } \right)} \right]_{1 \times m}^T } \right\rangle } \right| = \left\{ {\begin{array}{*{20}c}
   {1\;\;\;\;\;\;\;\;\;\;p_s^{i_1 }  = p_s^{i_2 } }  \\
   {\;\mu  \leqslant \mu _\mathbf{A} \;\;\;p_s^{i_1 }  \ne p_s^{i_2 } }  \\
 \end{array} } \right.$}
\end{equation}

where $1 \leqslant s \leqslant N$. As mentioned, $p_{s}^{{{i}_{1}}}=p_{s}^{{{i}_{2}}}$ at most for ${{d}_{\mathbf{P}}}$ elements and the remainder are not equal. Therefore, the inequality (\ref{Eqn:ineqality_CR}) reduces to
\begin{equation}\label{Eqn:---}\left| \left\langle \mathbf{c}_{mN\times 1}^{{{i}_{1}}},\mathbf{c}_{mN\times 1}^{{{i}_{2}}} \right\rangle  \right|\le \frac{1}{N}\left( {{d}_{\mathbf{P}}}+\left( N-{{d}_{\mathbf{P}}} \right){{\mu }_{\mathbf{A}}} \right)\end{equation}
with
\begin{equation}\label{Eqn:---}{{\mu }_{\mathbf{C}}}=\frac{1}{N}\left( {{d}_{\mathbf{P}}}+\left( N-{{d}_{\mathbf{P}}} \right){{\mu }_{\mathbf{A}}} \right)\end{equation}
In this way, we have achieved a new deterministic sensing matrix of coherence $\frac{1}{N}\left( {{d}_{\mathbf{P}}}+\left( N-{{d}_{\mathbf{P}}} \right){{\mu }_{\mathbf{A}}} \right)$ based on the primary sensing matrix ${{\mathbf{A}}_{m\times n}}$.

%\end {proof}

\begin {example} \label{Euler}
let the identity matrix $\mathbf{A}$ of size $q \times q$ be a sampling matrix with zero cohrerence, where $q$ is any prime power. Moreover, consider the matrix $\mathbf P$  of size $k \times q^2$ which is constructed as follows. Firstly, deploy all the codewords generated by the generator matrix ${{\mathbf{G}}_{2\times q}}={{\left[ {{\mathbf{g}}_{0}},{{\mathbf{g}}_{1}} \right]}^{T}}$ over $\mathbb F_q$ where ${{\mathbf{g}}_{0}}=\left[ 1,\cdots ,1 \right]_{1\times q}^{T}$ and ${{\mathbf{g}}_{1}}=\left[ \lambda_0, \cdots ,\lambda_{q-1} \right]_{1\times q}^{T}$, and $\lambda_i \in \mathbb{F}_q, i = 0,\cdots, q-1$, as the columns of matrix $\mathbf P_0$. Then, arbitrarily, choose $k$ rows of the matrix $\mathbf P_0$ to construct $\mathbf P$. Trivially, for this matrix, $d_\mathbf P$ is $1$, and therefore, through applying Theorem \ref{CR_Resizing_Method}, we will obtain binary sensing matrices of size $kq \times q^2$ with coherence $\frac 1k$. The result of this example resembles that of \cite{naidu_14}, and hence, shows the generality of this approach.
\end {example}

\begin {example} \label{125_3125}

  Let \\ \scalebox{.8}[1]{$\mathbf{A} = \left[ {\mathbf{a}_{p^2  \times 1}^0  \cdots \mathbf{a}_{p^2  \times 1}^{\gamma _2 \alpha ^2  + \gamma _1 \alpha ^1  + \gamma _0 }  \cdots } \right.\\ \left. {\mathbf{a}_{p^2  \times 1}^{\left( {p - 1} \right)\alpha ^2  + \left( {p - 1} \right)\alpha ^1  + \left( {p - 1} \right)} } \right]$} be a matrix of size ${{p}^{2}}\times {{p}^{3}}$ and coherence ${1}/{p}\;$, where ${{\gamma }_{0}},{{\gamma }_{1}},{{\gamma }_{2}}\in {{\mathbb{F}}_{p}}$, and ${{\mathbf{P}}_{p\times {{p}^{6}}}}$ be a matrix whose columns are codewords over ${{\mathbb{F}}_{{{p}^{3}}}}$ generated by the generator matrix ${{\mathbf{G}}_{2\times p}}={{\left[ {{\mathbf{g}}_{0}},{{\mathbf{g}}_{1}} \right]}^{T}}$ where ${{\mathbf{g}}_{0}}=\left[ 1,\cdots ,1 \right]_{1\times p}^{T}$ and ${{\mathbf{g}}_{1}}=\left[ 0,\cdots ,p-1 \right]_{1\times p}^{T}$. Obviously, the codewords of ${{\mathbf{P}}_{p\times {{p}^{6}}}}$ have a minimum distance equal to $p-1$,  and therefore, each two arbitrary columns of ${{\mathbf{P}}_{p\times {{p}^{6}}}}$ have at most one element in common; i.e. ${{d}_{\mathbf{P}}}=1$. Consequently, the sensing matrix $\mathbf{C}$ which can be resulted from Theorem \ref{CR_Resizing_Method} based on ${{\mathbf{A}}_{{{p}^{2}}\times {{p}^{3}}}}$ and ${{\mathbf{P}}_{p\times {{p}^{6}}}}$ is a matrix of size ${{p}^{3}}\times {{p}^{6}}$ and coherence ${{\mu }_{\mathbf{C}}}={\left( 2p-1 \right)}/{{{p}^{2}}}\;$.
For comparison purposes, we assume that the size of Kronecker product matrix $\mathbf{{C}'}$ is ${{p}^{3}}\times {{p}^{6}}$. According to Theorem \ref{Kronecker_Resizing_Method},  for  ${{\mathbf{A}}_{{{p}^{2}}\times {{p}^{3}}}}$ we need a matrix of size $p\times {{p}^{3}}$ to reach the same size of  $\mathbf{{C}'}$. In other words, $\mathbf{B}$ is a sensing matrix of size $p\times {{p}^{3}}$ whose coherence according to the Welch bound cannot be less than ${{\mu }_{\mathbf{B}}}=\sqrt{\frac{p+1}{{{p}^{2}}+p+1}}$. Therefore,
\begin{equation}\label{Eqn:---}{{\mu }_{{\mathbf{{C}'}}}}=\max \left( {{\mu }_{\mathbf{A}}},{{\mu }_{\mathbf{B}}} \right)=\max \left( \frac{1}{p},\sqrt{\frac{p+1}{{{p}^{2}}+p+1}} \right)\end{equation}

Consider the following comparison,
\begin{equation}\label{Eqn:---}\sqrt{\frac{p+1}{{{p}^{2}}+p+1}}\overset{?}{\mathop{>}}\,\frac{1}{p}\ \ \ \Rightarrow \ \ {{p}^{3}}\overset{?}{\mathop{>}}\,p+1\end{equation}
Obviously, ${{p}^{3}}>p+1$ for any prime integer, and therefore,
\begin{equation}\label{Eqn:---}{{\mu }_{{\mathbf{{C}'}}}}=\sqrt{\frac{p+1}{{{p}^{2}}+p+1}}\end{equation}

Now, we compare the coherence of the proposed sensing matrix obtained through resizing by the column replacement method and that of \cite{Amini_19}. Therefore, comparison between ${{\mu }_{{\mathbf{{C}'}}}}$ and ${{\mu }_{\mathbf{C}}}$ results in

\begin{equation}\label{Eqn:Coh_Comparison}
\resizebox{1\hsize}{!}{$
{\left( 2p-1 \right)}/{{{p}^{2}}}\;\overset{?}{\mathop{<}}\,\sqrt{\frac{p+1}{{{p}^{2}}+p+1}}\ \Rightarrow \ 0\overset{?}{\mathop{<}}\,\left( {{p}^{4}}-p \right)\left( p-3 \right)+1$}
\end{equation}

From (\ref{Eqn:Coh_Comparison}), one can easily see that for the prime integers more than 3, the proposed method results in a lower coherence.
Note that we assumed $\mathbf{B}$ is a sensing matrix of size $p\times {{p}^{3}}$ attaining the Welch bound. However, to the best of our knowledge, the best known deterministic sensing matrices asymptotically attaining the Welch bound are of size $p\times {{p}^{2}}$. This result again corroborates our method as a better approach compared to the Kronecker product method in \cite{Amini_19}.
\end{example}

\section {Simulation results} \label{Simulation_results}

We compare the proposed method with those of  \cite{Amini_19} and random sensing matrices. To do so, we consider three different scenarios to verify all the examples presented in Section III. The results are shown based on the average of 3000 independent trials (1000 trials for Figs. 3 and 4) for different $k$-sparse signals. In each scenario, the measurement matrix performance is verified based on the recovery percentage as well as the output SNR, while the sparsity order is increasing and the input SNR is augmented from 0 to 100 dB. The reconstruction algorithm is Orthogonal Matching Pursuit which is suitable to solve ${{l}_{1}}$-minimization problems.
In the first scenario, we examine Example 1 in which matrices of size ${{p}^{2}}\times {{p}^{3}}$ and coherence ${1}/{p}\;$ are generated over finite field ${{\mathbb{F}}_{5}}$. We get a sensing matrix of size $25\times 125$ and coherence of ${1}/{5}\;$.  Over finite field ${{\mathbb{F}}_{5}}$, the construction of \cite{Amini_19} leads to a size of $24\times 125$ with coherence of ${1}/{4}\;$.  The third matrix in this scenario is a Gaussian sensing matrix of size $25\times 125$. The results are depicted in Figs. 1 and 2. Sparsity order of 12 has been considered in Fig.2. Apparently, our proposed deterministic sensing matrices act better than that of \cite{Amini_19}.
Meanwhile, both deterministic structures surpass random sensing matrices in the sense of recovery percentage as well as the output SNR.

\begin{figure*}[!ht]
\centering
\begin{minipage}[b]{0.4\textwidth}
    \includegraphics[width=\textwidth]{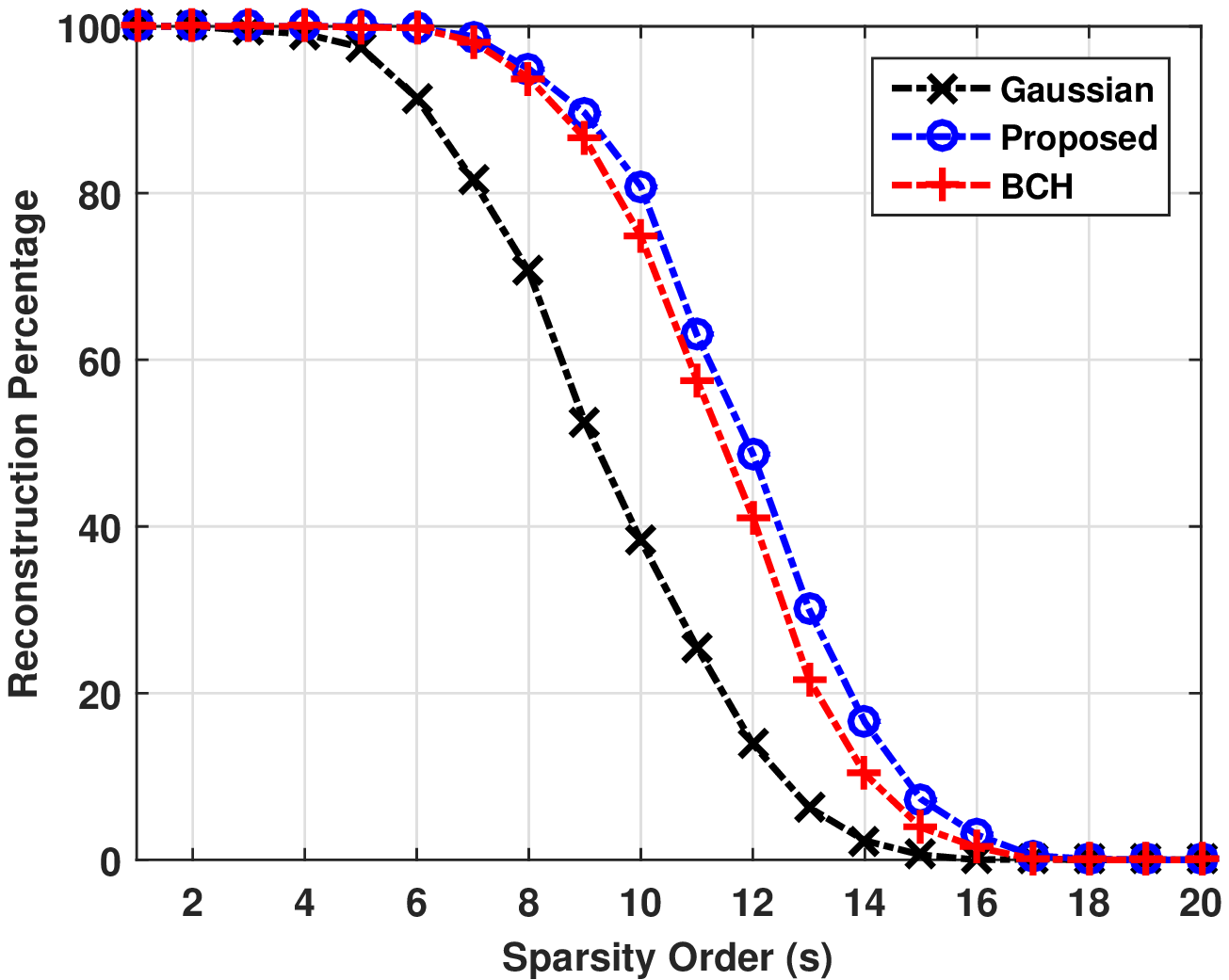}
  \caption{\label{Recons_Spars_42} Reconstruction percentage versus sparsity order.}
\end{minipage}
\hfill
  \centering
\begin{minipage}[b]{0.4\textwidth}
    \includegraphics[width=\textwidth]{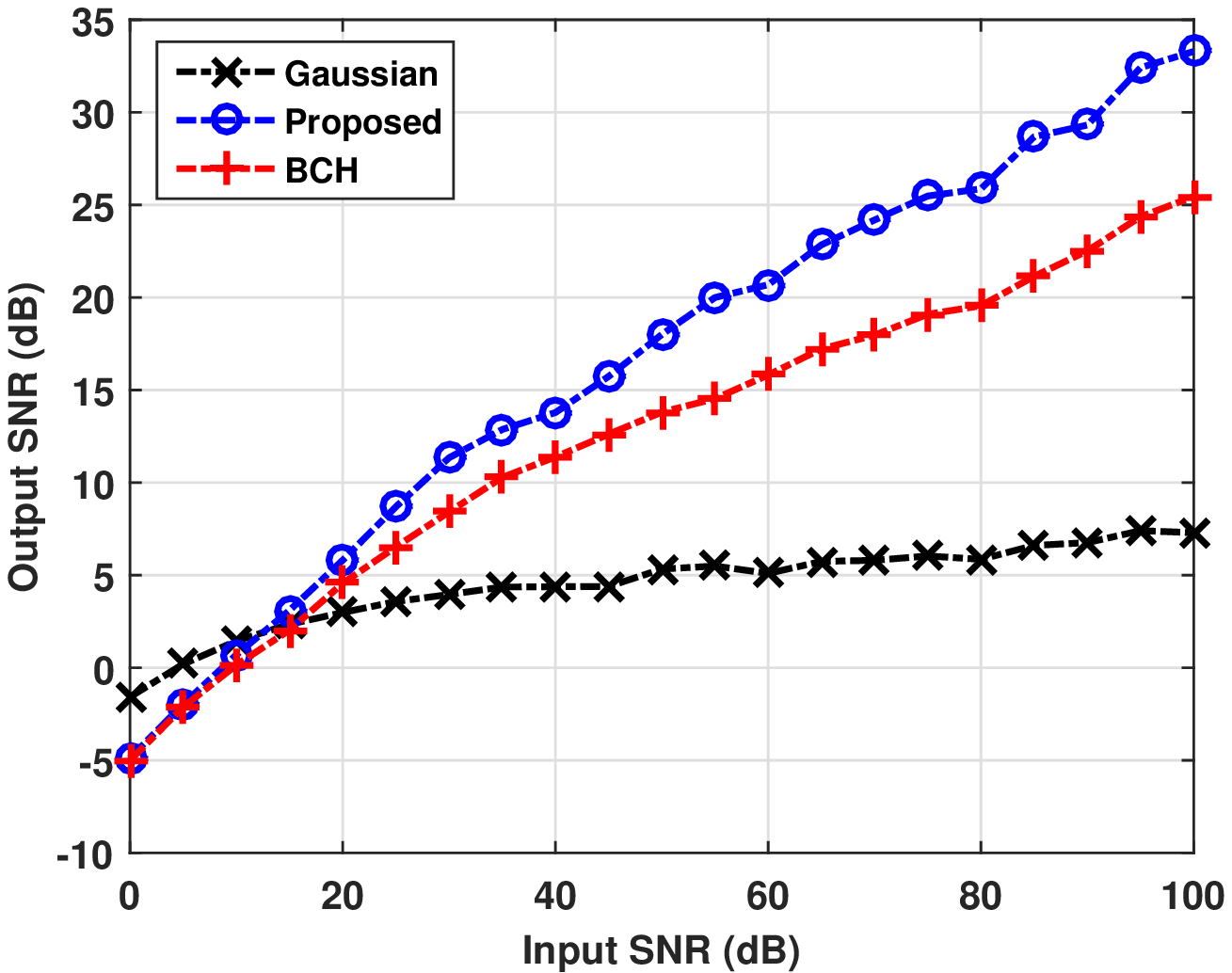}
  \caption{\label{SNR_SNR_42}Reconstruction SNR versus input SNR; sparsity of 13.}
\end{minipage}
\end{figure*}

In the last scenario, we compare our method to resize existing sensing matrices with the Kronecker product method. We assume that the primary sensing matrix $\mathbf{A}$ is of size $25\times 125$ and coherence ${1}/{5}\;$ and according to our example we need a pattern matrix of size $5\times 15625$ for each two distinct columns of which there exists at most one element in common and the resulting sensing matrix will be of size $125\times 15625$ and coherence $0.36$. On the other hand, to obtain a similar size sensing matrix based on the Kronecker product method, we need a matrix of size $5\times 125$ with near optimal coherence. Such a matrix has not been introduced so far and the best known structure attaining the Welch bound with the same number of rows is a matrix of size $5\times 25$ and coherence of ${1}/{\sqrt{5}}\;$. Therefore, we resize $\mathbf{A}$ up to $125\times 3125$ for which the resulting matrix has a coherence of  ${1}/{\sqrt{5}}\;$. Although, it is severely unfair to our construction; in which we could obtain a matrix of size $125\times 15625$ and coherence $0.36$, for comparison with the Kronecker product method, we have to reduce its size to $125\times 3125$. We also consider a Gaussian sensing matrix of size $125\times 3125$.
Finally, Figs. 3 and 4, where the sparsity is fixed to 28 to generate Fig. 5, illustrate the results of the third scenario in which we proposed a new method for resizing deterministic sensing matrices. As seen, in this example our method performs far better than the Kronecker product method. Notice that we have only kept ${1}/{5}\;$ of the available columns in our structure which reveals another advantage of the proposed method.

\begin{figure*}[!ht]
\centering
\begin{minipage}[b]{0.4\textwidth}
    \includegraphics[width=\textwidth]{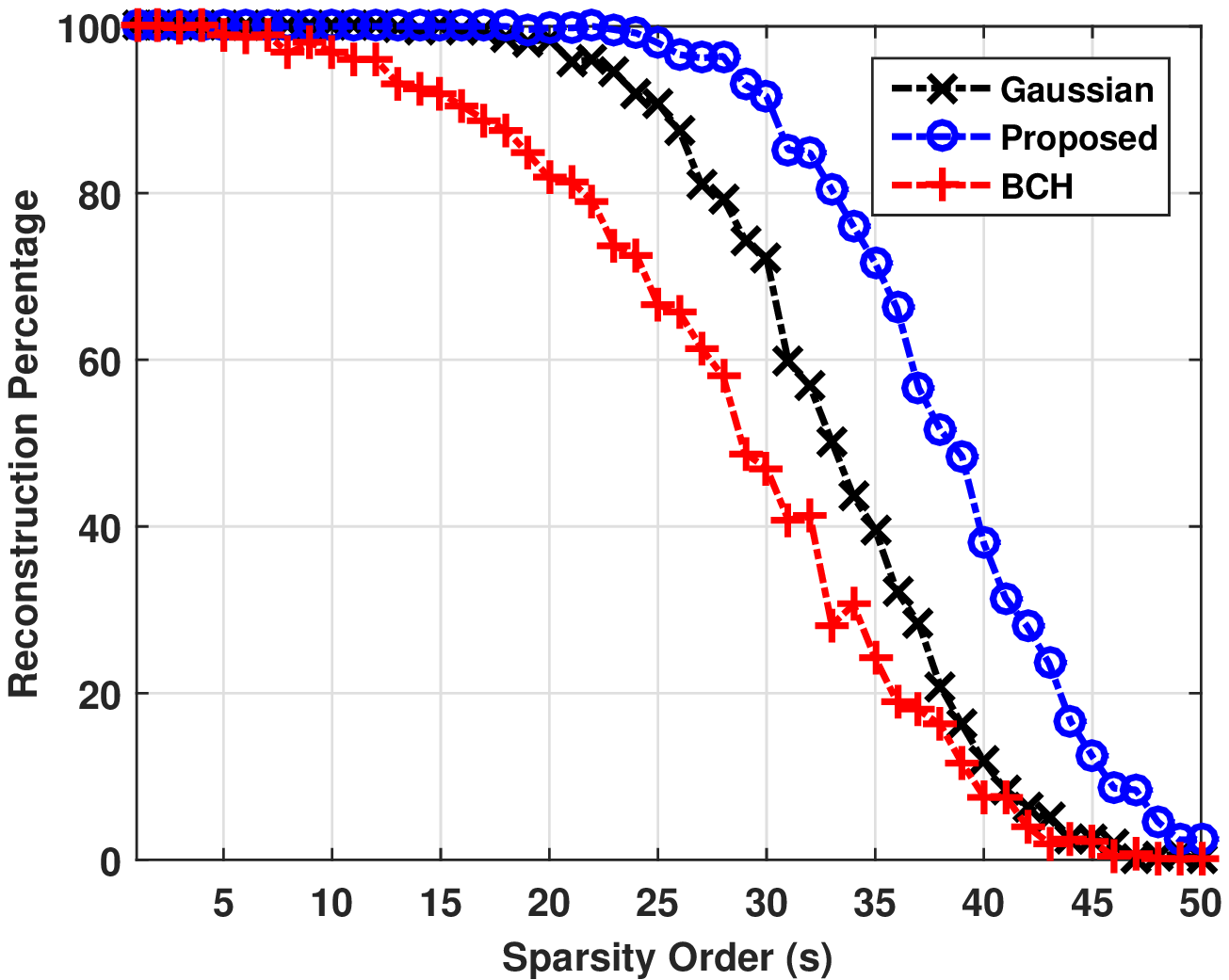}
  \caption{\label{Recons_Spars_125}Reconstruction percentage versus sparsity order.}
\end{minipage}
\hfill
  \centering
\begin{minipage}[b]{0.4\textwidth}
    \includegraphics[width=\textwidth]{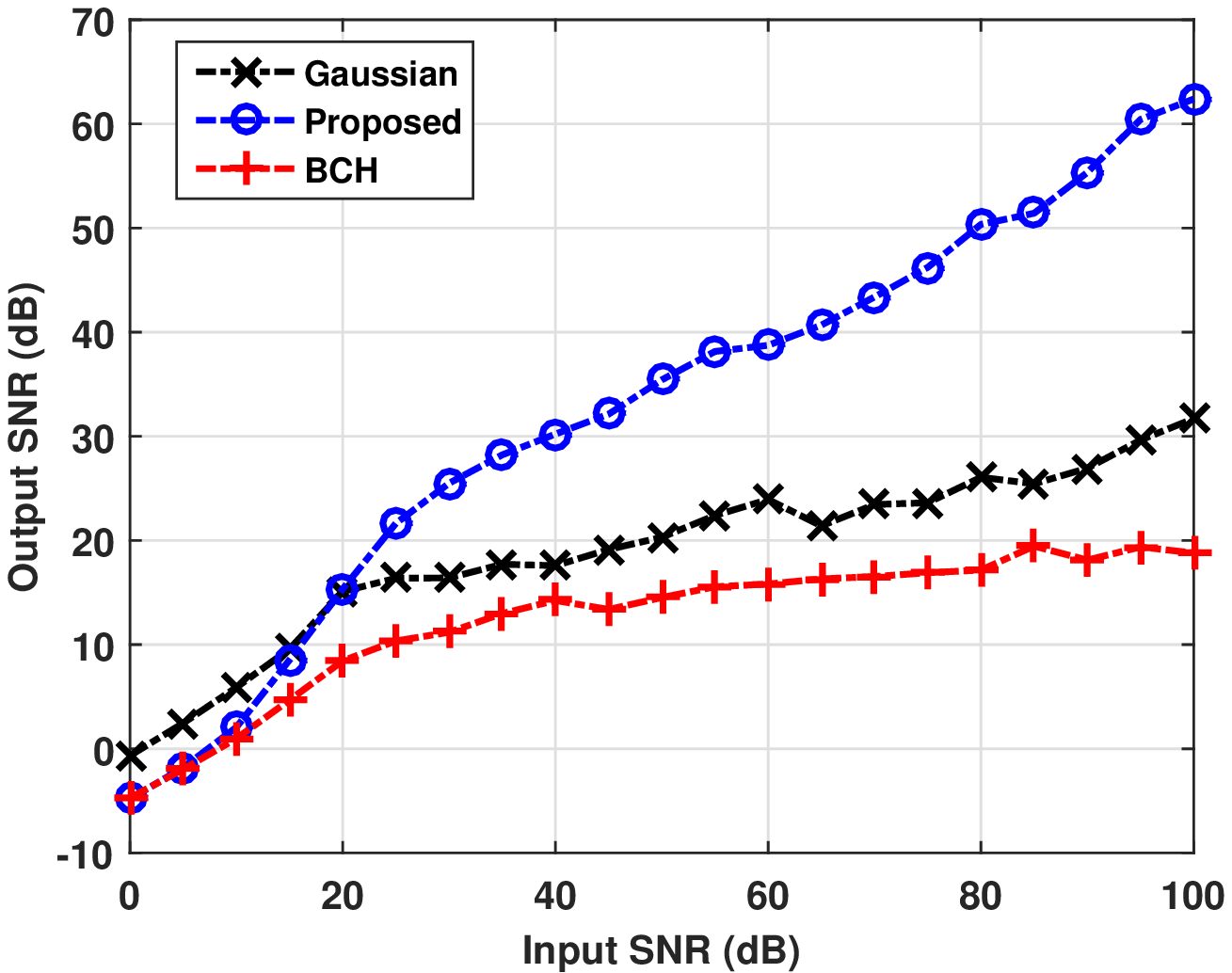}
  \caption{\label{SNR_SNR_125}Reconstruction SNR versus input SNR; sparsity of 39.}
  \centering
\end{minipage}
\end{figure*}

\section {Conclusion} \label{Conclusion}

 The column replacement concept was used to construct low coherence deterministic sensing matrices for CS problems.
 First, this method was employed to create deterministic sensing matrices. It was applied over two matrices of
 linear codewords to generate a new matrix of linear codewords suitable for creating sensing matrices. By using
 two examples and comparing with some existing asymptotically optimal matrices, we illustrated the generality of
 the proposed method. Secondly, we exploited the column replacement method to resize existing sensing matrices
 and found a bound for the coherence of resulted matrices. Using an example, as an especial case, it was shown
 that this method behaves better than the Kronecker product method developed in \cite{Amini_19}.
Observe that only a special case of Theorem \ref{Our_CR_Method} results in the matrices proposed in \cite{Mohades_15}, and hence, one may find much more diverse and
applicable structures by manipulating Theorem \ref{Our_CR_Method}.
Moreover, as seen in Example \ref{Euler}, our resizing approach is not only a way to increase the size of sampling matrices but also an approach to obtain well-structured sensing matrices.

 Finally, simulation results showed that by utilizing the proposed column
replacement method, the constructed
matrices perform better than the
existing deterministic sensing matrices
with approximately the same size. Moreover, simulation
results confirmed that the suggested
procedure of resizing sensing matrices
can perform much better compared to the
Kronecker product method.

%\appendix {Appendix A} \label{APP_A}
\renewcommand{\theequation}{A-\arabic{equation}}
  % redefine the command that creates the equation no.
  \setcounter{equation}{0}  % reset counter
  \section*{APPENDIX A} \label{APP_A} % use *-form to suppress numbering

Consider the coherence as follows.

\begin{equation}\label{Eqn:Coh_APP_A}
{
{{\mu }_{{\mathbf{{C}''}}}}=\underset{{{{{\lambda }''_{01}}}},{{{{\lambda }''_{10}}}},{{{{\lambda }''_{11}}}}\in {{\mathbb{F}}_{p}}}{\mathop{\max }}\,\frac{1}{{{p}^{2}}}\left| \sum\limits_{\phi =0}^{p-1}{\sum\limits_{\gamma =0}^{p-1}{{{e}^{j\frac{2\pi }{p}\left( \gamma \left( {{{{\lambda }''_{11}}}}\phi +{{{{\lambda }''_{01}}}} \right)+{{{{\lambda }''_{10}}}}\phi  \right)}}}} \right|}
\end{equation}

Contemplating (\ref{Eqn:Additive_Char}), the inner summation of  (\ref{Eqn:Coh_APP_A}) is non-zero, iff:
\begin{equation}\label{Eqn:---}{{{\lambda }''_{11}}}\phi +{{{\lambda }''_{01}}}=0\end{equation}

Then, the following cases will happen:
\\
1-  $\left( {{{{\lambda }''}}_{11}},{{{{\lambda }''}}_{01}} \right)=\left( 0,0 \right)$ and ${{{\lambda }''_{10}}}\ne 0$, for which  (\ref{Eqn:Shrinked_equation_Coh}) reduces to
\begin{equation}\label{Eqn:---}{{\mu }_{{\mathbf{{C}''}}}}=\underset{\left( {{{{\lambda }''}}_{10}}\ne 0 \right)\in {{\mathbb{F}}_{p}}}{\mathop{\max }}\,\frac{1}{p}\left| \underbrace{\sum\limits_{\phi =0}^{p-1}{\underbrace{{{e}^{j\frac{2\pi }{p}\left( {{{{\lambda }''}}_{10}}\phi  \right)}}}_{\chi \left( {{{{\lambda }''}}_{10}}\phi  \right)}}}_{=0} \right|=0\end{equation}

In this case, due to the non-zero value of  $\lambda ''_{10} $,  coherence is zero.
\\
2-  ${{{\lambda }''_{11}}}\ne 0$ and ${{{\lambda }''_{10}}},{{{\lambda }''_{01}}}$ are arbitrarily chosen. Consequently, ${{{\lambda }''_{11}}}\phi +{{{\lambda }''_{01}}}$ is zero only for one value of $\phi $, say $\phi ={{\phi }_{1}}$. Therefore, the coherence value is given by
%\end{multicols}
\renewcommand{\theequation}{A-\arabic{equation}}

\begin{equation}\label{Eqn:---}
\begin{gathered}
  {\mu _{{\mathbf{C''}}}} =  \hfill \\
  \mathop {\max }\limits_{\left\{ {\begin{array}{*{20}{c}}
  \hspace{-.2cm}\begin{subarray}{l} 
  0 \ne {{\lambda ''_{11}}}, \\ 
  {{\lambda ''_{01}}}, \\ 
  {{\lambda ''_{10}}} 
\end{subarray} &\hspace{-.3cm}{ \in {\mathbb{F}_p}} 
\end{array}} \right.} {\mkern 1mu}\hspace{-.3cm} \frac{1}{{{p^2}}}\left( {\left| {\underbrace {\sum\limits_{\left( {\phi  \ne {\phi _1}} \right) \in {\mathbb{F}_p}}^{} {\underbrace {\sum\limits_{\gamma  = 0}^{p - 1} {\underbrace {{e^{j\frac{{2\pi }}{p}\left( {\gamma \left( {{{\lambda ''_{11}}}\phi  + {{\lambda ''_{01}}}} \right) + {{\lambda ''_{10}}}\phi } \right)}}}_{\chi \left( {\gamma \left( {{{\lambda ''_{11}}}\phi  + {{\lambda ''_{01}}}} \right) + {{\lambda ''_{10}}}\phi } \right)}} }_{ = 0}} }_{ = 0} + } \right.} \right. \hfill \\ \hspace{+0.5cm}
  \,\,\,\,\,\,\,\,\,\,\,\,\,\,\,\,\,\,\,\,\,\,\,\,\,\,\,\,\,\,\,\,\left. {\left. {\underbrace {\sum\limits_{\gamma  = 0}^{p - 1} {\underbrace {{e^{j\frac{{2\pi }}{p}\left( {\gamma \underbrace {\left( {{{\lambda ''_{11}}}{\phi _1} + {{\lambda ''_{01}}}} \right)}_{ = 0} + {{\lambda ''_{10}}}{\phi _1}} \right)}}}_{\chi \left( {{{\lambda ''_{10}}}{\phi _1}} \right)}} }_{p*{e^{j\frac{{2\pi }}{p}\left( {{{\lambda ''_{10}}}{\phi _1}} \right)}}}} \right|} \right) = \frac{1}{p} \hfill \\ 
\end{gathered}
\end{equation}
%\begin{multicols}{2}
From the above cases, we can deduce that the coherence of  $\mathbf{{C}''}$is ${1}/{p}\;$, i.e. ${{\mu }_{{\mathbf{{C}''}}}}={1}/{p}\;$.

%\appendix {Appendix B} \label{APP_B}
\renewcommand{\theequation}{B-\arabic{equation}}
  % redefine the command that creates the equation no.
  \setcounter{equation}{0}  % reset counter
  \section*{APPENDIX B}  % use *-form to suppress numbering

With the same policy adopted in Appendix A, we obtain the coherence of ${{\mathbf{{C}''}}_{p\left( p-1 \right)\times {{p}^{3}}}}$ as
\begin{equation}\label{Eqn:---}
\resizebox{.72 \hsize}{!}{$
{{\mu }_{{\mathbf{{C}''}}}}=\underset{j\ne k}{\mathop{\max }}\,\left| \left\langle {{{\mathbf{{c}''}}}^{j}}_{p\left( p-1 \right)\times 1},{{{\mathbf{{c}''}}}^{k}}_{p\left( p-1 \right)\times 1} \right\rangle  \right|$}
\end{equation}
\begin{equation}\label{Eqn:Coh_APP_B}
\begin{gathered}
  {\mu _{{\mathbf{C''}}}} =  \hfill \\
  \mathop {\max }\limits_{\left\{ {\begin{array}{*{20}{c}}
\hspace{-.2cm}  \begin{subarray}{l} 
  {{\lambda ''_{01}}}, \\ 
  {{\lambda ''_{10}}}, \\ 
  {{\lambda ''_{11}}} 
\end{subarray} &\hspace{-.3cm}{ \in {\mathbb{F}_p}} 
\end{array}} \right.} {\mkern 1mu} \frac{1}{{p\left( {p - 1} \right)}}\left| {\sum\limits_{\phi  = 0}^{p - 2} {\sum\limits_{\gamma  = 0}^{p - 1} {{e^{j\frac{{2\pi }}{p}\left( {\gamma \left( {{{\lambda ''_{11}}}\phi  + {{\lambda ''_{01}}}} \right) + {{\lambda ''_{10}}}\phi } \right)}}} } } \right| \hfill \\ 
\end{gathered} 
\end{equation}
Considering (\ref{Eqn:Additive_Char}), the inner summation of (\ref{Eqn:Coh_APP_B}) is non-zero, iff:
\begin{equation}\label{Eqn:---}{{{\lambda }''_{11}}}\phi +{{{\lambda }''_{01}}}=0\end{equation}

Then, the following cases will happen:
\\
\\
1-  ${{{\lambda }''_{11}}}\ne 0$ and ${{{\lambda }''_{10}}},{{{\lambda }''_{01}}}$ are arbitrarily chosen. Consequently, ${{{\lambda }''_{11}}}\phi +{{{\lambda }''_{01}}}$ might be zero only for one value of $\phi $, say $\phi ={{\phi }_{1}}\ne p-1$. Note that if the only choice which sets ${{{\lambda }''_{11}}}\phi +{{{\lambda }''_{01}}}$ zero, is $\phi =p-1$, then, according to (\ref{Eqn:Additive_Char}), ${{\mu }_{{\mathbf{{C}''}}}}=0$. This is due to previously considering $\phi \in \left\{ {{\mathbb{F}}_{p}}-\left( p-1 \right) \right\}$).
Then, the coherence at the worst case is

%\end{multicols}
\renewcommand{\theequation}{B-\arabic{equation}}

\begin{equation}\label{Eqn:---}
\resizebox {1 \hsize} {!}{$
\begin{gathered}
  {\mu _{{\mathbf{C''}}}} =  \hfill \\
 \hspace{-.2cm} \mathop {\max }\limits_{\left\{ {\begin{array}{*{20}{c}}
  \hspace{-.2cm}\begin{subarray}{l} 
  0 \ne {{\lambda ''_{11}}}, \\ 
  {{\lambda ''_{01}}}, \\ 
  {{\lambda ''_{10}}} 
\end{subarray} &\hspace{-.4cm}{ \in {\mathbb{F}_p}} 
\end{array}} \right.}\hspace{-.25cm} \frac{1}{{p\left( {p - 1} \right)}}\left( {\left| {\underbrace {\sum\limits_{\left( {\phi  \ne {\phi _1}} \right) \in \left\{ {{\mathbb{F}_p} - \left( {p - 1} \right)} \right\}}^{} {\underbrace {\sum\limits_{\gamma  = 0}^{p - 1} {\underbrace {{e^{j\frac{{2\pi }}{p}\left( {\gamma \left( {{{\lambda ''_{11}}}\phi  + {{\lambda ''_{01}}}} \right) + {{\lambda ''_{10}}}\phi } \right)}}}_{\chi \left( {\gamma \left( {{{\lambda ''_{11}}}\phi  + {{\lambda ''_{01}}}} \right) + {{\lambda ''_{10}}}\phi } \right)}} }_{ = 0}} }_{ = 0} + } \right.} \right. \hfill \\ \hspace{+2.9cm}
  \left. {\left. {\underbrace {\sum\limits_{\gamma  = 0}^{p - 1} {\underbrace {{e^{j\frac{{2\pi }}{p}\left( {\gamma \underbrace {\left( {{{\lambda ''_{11}}}{\phi _1} + {{\lambda ''_{01}}}} \right)}_{ = 0} + {{\lambda ''_{10}}}{\phi _1}} \right)}}}_{\chi \left( {{{\lambda ''_{10}}}{\phi _1}} \right)}} }_{p*{e^{j\frac{{2\pi }}{p}\left( {{{\lambda ''_{10}}}{\phi _1}} \right)}}}} \right|} \right) = \frac{1}{{p - 1}} \hfill \\ 
\end{gathered}$} \end{equation}
\\
%\begin{multicols}{2}
2-  $\left( {{{{\lambda }''}}_{11}},{{{{\lambda }''}}_{01}} \right)=\left( 0,0 \right)$ and ${{{\lambda }''_{10}}}\ne 0$ for which we get
\begin{equation}\label{Eqn:---}
\begin{gathered}
  {\mu _{{\mathbf{C''}}}} =  \hfill \\
  \,\,\mathop {\max }\limits_{\left( {{{\lambda ''_{10}}} \ne 0} \right) \in {\mathbb{F}_p}} {\mkern 1mu} \frac{1}{{p - 1}}\left| {\underbrace {\sum\limits_{\phi  = 0}^{p - 1} {\underbrace {{e^{j\frac{{2\pi }}{p}\left( {{{\lambda ''_{10}}}\phi } \right)}}}_{\chi \left( {{{\lambda ''_{10}}}\phi } \right)}} }_{ = 0} - {e^{j\frac{{2\pi }}{p}\left( {{{\lambda ''_{10}}}\left( {p - 1} \right)} \right)}}} \right| \hfill \\
  \,\,\, = \frac{1}{{p - 1}} \hfill \\ 
\end{gathered} 
\end{equation}
Based on the above cases, one can infer that the coherence of $\mathbf{{C}''}$ is ${1}/{\left( p-1 \right)}\;$, i.e. ${{\mu }_{{\mathbf{{C}''}}}}={1}/{\left( p-1 \right)}\;$.

% can use a bibliography generated by BibTeX as a .bbl file
% BibTeX documentation can be easily obtained at:
% http://www.ctan.org/tex-archive/biblio/bibtex/contrib/doc/

%\bibliographystyle{imaiai}
%\bibliography{sample}

\begin{thebibliography}{9}

\bibitem{Donoho_1}
D.~L. Donoho, ``Compressed sensing,'' {\em Information Theory, IEEE
  Transactions on}, vol.~52, no.~4, pp.~1289--1306, 2006.

\bibitem{Candes_2}
E.~J. Candes and T.~Tao, ``Near-optimal signal recovery from random
  projections: Universal encoding strategies?,'' {\em Information Theory, IEEE
  Transactions on}, vol.~52, no.~12, pp.~5406--5425, 2006.

\bibitem{Ge_3}
D.~Ge, X.~Jiang, and Y.~Ye, ``A note on the complexity of l p minimization,''
  {\em Mathematical programming}, vol.~129, no.~2, pp.~285--299, 2011.

\bibitem{Candes_4}
E.~J. Candes and T.~Tao, ``Decoding by linear programming,'' {\em Information
  Theory, IEEE Transactions on}, vol.~51, no.~12, pp.~4203--4215, 2005.

\bibitem{Tropp_5}
J.~A. Tropp and A.~C. Gilbert, ``Signal recovery from random measurements via
  orthogonal matching pursuit,'' {\em Information Theory, IEEE Transactions
  on}, vol.~53, no.~12, pp.~4655--4666, 2007.

\bibitem{Baraniuk_6}
R.~Baraniuk, M.~Davenport, R.~DeVore, and M.~Wakin, ``A simple proof of the
  restricted isometry property for random matrices,'' {\em Constructive
  Approximation}, vol.~28, no.~3, pp.~253--263, 2008.

\bibitem{Bourgain_7}
J.~Bourgain, S.~Dilworth, K.~Ford, S.~Konyagin, D.~Kutzarova, {\em et~al.},
  ``Explicit constructions of rip matrices and related problems,'' {\em Duke
  Mathematical Journal}, vol.~159, no.~1, pp.~145--185, 2011.

\bibitem{Welch_8}
L.~R. Welch, ``Lower bounds on the maximum cross correlation of signals
  (corresp.),'' {\em Information Theory, IEEE Transactions on}, vol.~20, no.~3,
  pp.~397--399, 1974.

\bibitem{DeVore_9}
R.~A. DeVore, ``Deterministic constructions of compressed sensing matrices,''
  {\em Journal of Complexity}, vol.~23, no.~4, pp.~918--925, 2007.

\bibitem{Mohades_10}
A.~Mohades and A.~Tadaion, ``Finite projective spaces in deterministic
  construction of measurement matrices,'' {\em IET Signal Processing}, 2016.

\bibitem{Li_11}
S.~Li, F.~Gao, G.~Ge, and S.~Zhang, ``Deterministic construction of compressed
  sensing matrices via algebraic curves,'' {\em Information Theory, IEEE
  Transactions on}, vol.~58, no.~8, pp.~5035--5041, 2012.

\bibitem{sasmal_24}
P.~Sasmal, R.~R. Naidu, C.~S. Sastry, and P.~Jampana, ``Composition of binary
  compressed sensing matrices,'' {\em IEEE Signal Processing Letters}, vol.~23,
  no.~8, pp.~1096--1100, 2016.

\bibitem{naidu_14}
R.~R. Naidu, P.~Jampana, and C.~S. Sastry, ``Deterministic compressed sensing
  matrices: Construction via euler squares and applications.,'' {\em IEEE
  Trans. Signal Processing}, vol.~64, no.~14, pp.~3566--3575, 2016.

\bibitem{naidu_25}
R.~R. Naidu and C.~R. Murthy, ``Construction of binary sensing matrices using
  extremal set theory,'' {\em IEEE Signal Processing Letters}, vol.~24, no.~2,
  pp.~211--215, 2017.

\bibitem{Mohades_15}
M.~Mohades, A.~Mohades, and A.~Tadaion, ``A reed-solomon code based measurement
  matrix with small coherence,'' {\em Signal Processing Letters, IEEE},
  vol.~21, no.~7, pp.~839--843, 2014.

\bibitem{Amini_16}
A.~Amini, H.~Bagh-Sheikhi, and F.~Marvasti, ``From paley graphs to
  deterministic sensing matrices with real-valued gramians,'' in {\em Sampling
  Theory and Applications (SampTA), 2015 International Conference on},
  pp.~372--376, IEEE, 2015.

\bibitem{Amini_18}
A.~Amini and F.~Marvasti, ``Deterministic construction of binary, bipolar, and
  ternary compressed sensing matrices,'' {\em Information Theory, IEEE
  Transactions on}, vol.~57, no.~4, pp.~2360--2370, 2011.

\bibitem{Amini_19}
A.~Amini, V.~Montazerhodjat, and F.~Marvasti, ``Matrices with small coherence
  using-ary block codes,'' {\em Signal Processing, IEEE Transactions on},
  vol.~60, no.~1, pp.~172--181, 2012.

\bibitem{Xu_20}
W.~Xu and B.~Hassibi, ``Efficient compressive sensing with deterministic
  guarantees using expander graphs,'' in {\em Information Theory Workshop,
  2007. ITW'07. IEEE}, pp.~414--419, IEEE, 2007.

\bibitem{Calderbank_21}
R.~Calderbank, S.~Howard, and S.~Jafarpour, ``Construction of a large class of
  deterministic sensing matrices that satisfy a statistical isometry
  property,'' {\em Selected Topics in Signal Processing, IEEE Journal of},
  vol.~4, no.~2, pp.~358--374, 2010.

\bibitem{Colbourn_23}
C.~J. Colbourn, D.~Horsley, and C.~McLean, ``Compressive sensing matrices and
  hash families,'' {\em Communications, IEEE Transactions on}, vol.~59, no.~7,
  pp.~1840--1845, 2011.

\bibitem{Finite_22}
R.~Lidl and H.~Niederreiter, {\em Finite fields},
\newblock (Cambridge university press, Cambridge, 2nd edn. 1997).

\end{thebibliography}
%
% once the .bbl file has been generated then place the text in your article.

% To get the numbered reference style the author should use [numbib]
%as an option in the document class.  For example: \documentclass[numbib]{imaiai}

%\begin{thebibliography}{9}
%\bibitem{A}
%Vicsek T., Czir\'{o}k A., Ben-Jacob E., Cohen I.: `Novel type of phase transition in a system of self-driven particles', \textit{Phys. %Rev. Lett.}, 1995, \textbf{75}, pp.~1226--1229

%\bibitem{2}
%Jadbabaie A., Lin J., Morse A.S.: `Coordination of groups of
%mobile autonomous agents using nearest neighbor rules', \textit{IEEE
%Trans. Autom. Control}, 2003, \textbf{48}, pp.~988--1001

%\bibitem{3}
%Olfati-Saber R., Murray R.M.: `Consensus problems in networks
%of agents with switching topology and time-delays', \textit{IEEE Trans.
%Autom. Control}, 2004, \textbf{49}, pp.~1520--1533
%\end{thebibliography}

\end{document}